\documentclass[12pt,draftclsnofoot,onecolumn]{IEEEtran}

\usepackage{amsfonts}
\usepackage{times}
\usepackage{latexsym}
\usepackage{amssymb}
\usepackage{amsmath}
\usepackage{cite}
\usepackage{verbatim}

\newcommand{\bydef}{\triangleq}

\def\SNR{{\textsf{SNR}}}

\def\bydef{:=}

\def\bb0{{\mathbb{0}}}

\def\bydef{:=}

\def\bb{{\mathbf{b}}}

\def\bh{{\mathbf{h}}}

\def\bs{{\mathbf{s}}}

\def\b0{{\mathbf{0}}}


\def\bB{{\mathbf{B}}}

\def\bI{{\mathbf{I}}}


\def\bbE{{\mathbb{E}}}

\def\bbN{{\mathbb{N}}}

\def\bbR{{\mathbb{R}}}



\def\bydef{:=}

\def\sf0{{\mathsf{0}}}

\def\nn{\nonumber}

\usepackage{graphicx}
\usepackage{amssymb}
\usepackage{amsfonts}
\usepackage{amsmath}
\usepackage{latexsym}
\addtolength{\topmargin}{-.4in}
 \addtolength{\textheight}{.2in}
 \textheight = 9.8in

\begin{document}

\newtheorem{thm}{Theorem}
\newtheorem{lemma}{Lemma}
\newtheorem{rem}{Remark}
\newtheorem{exm}{Example}
\newtheorem{prop}{Proposition}
\newtheorem{defn}{Definition}
\newtheorem{cor}{Corollary}
\def\proof{\noindent\hspace{0em}{\itshape Proof: }}
\def\endproof{\hspace*{\fill}~\QED\par\endtrivlist\unskip}
\def\bh{{\mathbf{h}}}
\def\SNR{{\mathsf{SNR}}}
\title{Super Critical and Sub Critical Regimes of Percolation with Secure Communication}
\author{
Rahul~Vaze
\thanks{Rahul~Vaze is with the School of Technology and Computer Science, Tata Institute of Fundamental Research, Homi Bhabha Road, Mumbai 400005, vaze@tcs.tifr.res.in. }}

\date{}
\maketitle
\noindent
\begin{abstract}
Percolation in an information-theoretically secure graph is considered where both the legitimate and the eavesdropper nodes 
are distributed as Poisson point processes. For both the path-loss and the path-loss plus fading model, 
upper and lower bounds on the minimum density of the legitimate nodes 
(as a function of the density of the eavesdropper nodes) required for non-zero probability of having an 
unbounded cluster are derived. The lower bound is universal in nature, i.e.  the constant does not depend on the density of the 
eavesdropper nodes. 
\end{abstract}

\section{Introduction} 
Percolation theory studies the phenomenon of formation of unbounded connected clusters in large graphs \cite{BookRoy}, and 
percolation is defined as the event that there exists an unbounded connected cluster in a graph. 
Any wireless network can be naturally thought of as a graph, where the presence of an edge/connection between any two nodes 
can be defined in variety of ways \cite{BookRoy, BookPenrose}. Percolation in a wireless network corresponds to having long range connectivity, 
i.e. nodes that are far apart in space have a connected path between them. 
Thus, percolation theory is a natural tool to study the long-range connectivity in multi-hop wireless networks. 

Assuming the location of nodes of the wireless network to be distributed as a Poisson point process (PPP) with intensity $\lambda$, percolation has been studied 
for the Boolean model \cite{Gilbert1961}, where two nodes are connected if the distance between them is less than a fixed radius, for the SINR model \cite{Dousse2006}, where two nodes are connected if the SINR between them is greater than a threshold, and for the random connection model \cite{Penrose1991}, where two nodes are connected  with some probability that depends on the distance between them independently of other nodes. 
In all these works \cite{Gilbert1961,Dousse2006,Penrose1991}, a phase transition behavior has been established, i.e. there exists a critical intensity $\lambda_c$, where if 
$\lambda < \lambda_c$, then the probability of percolation is zero, while if $\lambda > \lambda_c$ then percolation happens almost surely. 

Recently, to study the existence of unbounded connected clusters in wireless networks in the presence of eavesdroppers, the concept of secure percolation model has been defined in \cite{Pinto2010,HaenggiSec2010}. The secure percolation model allows legitimate connected nodes to exchange information at a non-zero rate while maintaining perfect secrecy from all the eavesdroppers \cite{Wyner1975}.
For the path-loss model of signal propagation, where an edge  between a legitimate node $i$ and legitimate node $j$ exists, if node $j$ is closer to node $i$ than its nearest eavesdropper,  existence of the phase transition phenomenon has been established for  secure percolation in \cite{Pinto2010} assuming that the locations of the legitimate as well as the eavesdropper nodes are distributed as independent PPPs. 

In this paper, for studying secure percolation in wireless networks, we consider both the path-loss and path-loss plus fading model of signal propagation, where the
path-loss model is as described above, while in the path-loss plus fading model, two legitimate nodes $i$ and  $j$ are connected, if the signal power (product of path-loss and fading channel magnitude) received at node $j$ from node $i$ is larger than the signal power received at any other eavesdropper.  

The contributions of this paper are as follows.
We first derive an universal lower  bound on the critical intensity for secure percolation with the path-loss model, and later extend it to the path-loss plus fading model when the fading channel magnitudes have  finite support. We show that the critical intensity $\lambda_c \ge \frac{\lambda_E}{c}$, where $\lambda_E$ is the intensity of the eavesdropper process, and $c>1$ is a constant that does not depend on $\lambda_E$. Previously,  a lower bound has been obtained on $\lambda_c$ for the path-loss model in \cite{Pinto2010}, where the constant depends on $\lambda_E$. 
To show that $\lambda_c \ge \frac{\lambda_E}{c}$, we use the result of \cite{Gourre2008} on the critical intensity of percolation in random Boolean model. 
We also obtain upper bounds on the critical intensity for both the path-loss and path-loss plus fading models using a different approach compared to \cite{Pinto2010}, since the upper bound of \cite{Pinto2010} is not valid for the path-loss plus fading model.

{\it Notation:}
The expectation of function $f(x)$ with respect to $x$ is denoted by
${\bbE}(f(x))$. The modulus of  $x$ is denoted by $|x|$.
A circularly symmetric complex Gaussian random
variable $x$ with zero mean and variance $\sigma^2$ is denoted as $x
\sim {\cal CN}(0,\sigma^2)$.  $(x)^+$ denotes the function $\max\{x,0\}$. The cardinality of set $S$ is denoted by $\#(S)$. The complement of set $S$ is denoted by $S^c$. $S_2\backslash S_1$ represents the elements of $S_2$ that are not in its subset $S_1$. We denote the 
origin by ${\bf 0}$. 
A ball of radius $r$ centered at $x$ is denoted by $\bB(x,r)$. The boundary of a geometric object $G$ is denoted by $\delta G$. Area of region $B\in \bbR^2$ is denoted by $\nu(B)$. $H(x)$ denotes the entropy of random variable $x$ \cite{Cover2004}. 
 We use the symbol
$\bydef$  to define a variable.
\section{System Model}
\label{sec:sys}
Consider a wireless network consisting of the set of legitimate users denoted by $\Phi$, and the set of eavesdroppers  denoted by $\Phi_E$. 
We consider the secrecy graph model of \cite{HaenggiSec2010, Pinto2010} which is as follows. Let $x_i$ and $x_j$, $x_i,x_j\in\Phi$, want to communicate secretly, i.e. without providing any knowledge of their communication to any node in $\Phi_E$. 
Then to send a message $m$, $x_i$ sends a signal $\bs = (s(1),\dots,s(n))$ to $x_j$ over $n$ time slots. The received signals at $x_j$ ($rx_{j}$), and 
$e\in \Phi_E$ ($rx_{e}$), are 
\begin{equation}
rx_j(\ell) = d_{ij}^{-\alpha/2}h_{ij}s(\ell) + v_{ij}(\ell), \ \ell=1,2,\dots,n,
\end{equation}
and
\begin{equation}
rx_e(\ell) = d_{ie}^{-\alpha/2}h_{ie}s(\ell) + v_{ie}(\ell), \ \ell=1,2,\dots,n,
\end{equation}
respectively, 
where $d_{ij}$ and $d_{ie}$ are the distances between $x_i$ and $x_j$, and $x_i$ and $e$, respectively, $\alpha>2$ is the path loss exponent, $h_{ij}$ and $h_{ie}$ are the fading channel coefficients between $x_i$ and $x_j$, and $x_i$ and $e$, respectively, 
that is constant for $n$ time uses,  and 
$v_{ij}(\ell), v_{ie}(\ell) \sim {\cal CN}(0,1)$. We assume that $\bs, h_{ij}, h_{ie}, v_{ij}(\ell), v_{ie}(\ell)$ are independent of each other. Assuming an average power constraint of $P$ at each node in $\Phi$, i.e. 
$\frac{\sum_{\ell=1}^n \bbE\{|s(\ell)|^2\}}{n} \le P$,  the maximum rate of reliable communication between 
$x_i$ and $x_j$ such that an eavesdropper $e$ gets no knowledge about message $m$, i.e. $H(m| rx_e(1) \dots rx_e(n)) = H(m)$, is \cite{Wyner1975}
\[R_{ij}(e) \bydef \left[\log_2\left(1+ P d_{ij}^{-\alpha}|h_{ij}|^2\right) - \log_2\left(1+ P d_{ie}^{-\alpha}|h_{ie}|^2\right)\right]^+.\] 
Thus, $R_{ij}(e)$ is the communication rate between $x_i$ and $x_j$ that is secure from eavesdropper $e$. 
To consider communication between $x_i$ and $x_j$ that is secured from  all the eavesdropper nodes of $\Phi_E$, 
we define $R_{ij}$ as the rate of secure communication (secrecy rate) between $x_i$ and $x_j$ if 
\[R_{ij} \bydef \min_{e\in \Phi_E} R_{ij}(e).\] 


\begin{defn} Secrecy Graph \cite{Pinto2010}: Secrecy graph is a directed graph $SG(\gamma) \bydef \{\Phi, {\cal E}\}$, with vertex set $\Phi$, and edge 
set ${\cal E} \bydef \{(x_i, x_j): R_{ij}\ge \gamma\}$, where $\gamma$ is the minimum rate of secure communication required between any two nodes of $\Phi$. 
\end{defn}

\begin{defn} We define that there is a {\it path} from node $x_i$ to $x_j$ if there is a connected path from $x_i$ to $x_j$ in the SG.  A path between $x_i$ and $x_j$ on $SG(\gamma)$ is represented as $x_i \rightarrow x_j$.
\end{defn}

\begin{defn} We define that a node $x_i$ can {\it connect} to  $x_j$ if there is an edge between $x_i$ and $x_j$ in the $SG(\gamma)$.\end{defn}

Similar to \cite{Pinto2010, HaenggiSec2010}, in this paper we assume that the locations of $\Phi$ and $\Phi_E$ are distributed as independent homogenous 
Poisson point processes (PPPs) with 
intensities $\lambda$ and $\lambda_E$, respectively. 
The secrecy graph  when $\Phi$ and $\Phi_E$ are distributed as PPPs is referred to as the Poisson secrecy graph (PSG).
Moreover, we consider $\gamma =0$ in the rest of the paper, and drop the index $\gamma$ from the definition of PSG. Therefore there exists an edge between $x_i$ and $x_j$ in PSG if it can support a non-zero secrecy rate, $R_{ij} > 0$. Generalization to $\gamma >0$ is straightforward. We define the connected component of any node $x_j \in \Phi$, as $C_{x_j} \bydef   \{x_k \in \Phi, x_j\rightarrow x_k\}$, with cardinality  $|C_{x_j}|$. Note that because of stationarity of the PPP, the distribution of $|C_{x_j}|$ does not depend on $j$, and hence without loss of generality from here on we consider node $x_1$ for the purposes of defining connected components.

In this paper we are interested in studying the percolation properties of the PSG. In particular, 
we are interested in finding the minimum value of $\lambda$, $\lambda_c$, for which the probability of having an unbounded connected component in PSG is greater than zero as a function of  $\lambda_E$, i.e. $\lambda_c \bydef \inf \{{\lambda}:P(|{\cal C}_{x_1}| = \infty)>0\}$. 
The event  $\{|{\cal C}_{x_1}| = \infty\}$ is also referred to as {\it percolation} on PSG, and we say that percolation happens if 
$P(\{|{\cal C}_{x_1}| = \infty\})>0$, and does not happen if $P(\{|{\cal C}_{x_1}| = \infty\})=0$. From the Kolmogorov's zero-one law \cite{BookDurrett},  in a PPP percolation model, a phase transition behavior is observed, where below the critical density $\lambda < \lambda_c$ (subcritical regime), the probability of formation of unbounded connected components is zero, while for $\lambda > \lambda_c$ (supercritical regime) there is an unbounded connected component with probability one  \cite{BookRoy}.


\begin{rem} Note that we have defined PSG to be  a directed graph, and the component of $x_1$ is its out-component, i.e. the set of nodes with which $x_1$ can communicate secretly. 
Since $x_i \rightarrow x_j, \ x_i,x_j\in \Phi$, does not imply $x_j \rightarrow x_i \ x_i,x_j\in \Phi$,  one can similarly define in-component 
$C_{x_j}^{in} \bydef   \{x_k \in \Phi, x_k\rightarrow x_j\}$,  bi-directional component $C_{x_j}^{bd} \bydef   \{x_k \in \Phi, x_k\rightarrow x_j\ \text{and} \ x_k\rightarrow x_j\}$,  and either one-directional component $C_{x_j}^{ed} \bydef   \{x_k \in \Phi, x_k\rightarrow x_j \ \text{or} \ x_k\rightarrow x_j \}$. Percolation results for $C_{x_j}^{in}$, $C_{x_j}^{bd}$ and $C_{x_j}^{ed}$ follow similar to the results presented in this paper for $C_{x_j}$. 
\end{rem}

 
  \section{Path-Loss Model} 
  \label{sec:pl}
  With the path-loss model, where $h_{ij}=1, h_{ie}=1,$ for 
$\forall \ x_i,x_j\in \Phi, e\in \Phi_E$, \[R_{ij} \bydef \left[\log_2\left(1+ P d_{ij}^{-\alpha}\right) - \log_2\left(1+ P \max_{e\in \Phi_E}d_{ie}^{-\alpha}\right)\right]^+.\] With $\gamma =0$, 
  $PSG = \{\Phi, {\cal E}\}$, where the edge set ${\cal E} = \{(x_i, x_j): d_{ij}\le \min_{e\in \Phi_E}d_{ie}\}$, i.e. $x_i$ can connect to $x_j$,
  if  $x_j$ is closer to $x_i$ than any other eavesdropper.  Therefore, with $\gamma =0$, 
  in the path-loss model, node $x_i\in \Phi$ can connect to those nodes of $\Phi$ that are closer than its nearest 
  eavesdropper of $\Phi_E$. The maximum radius of connectivity of any node $x_i$ is denoted by $\rho(x_i) \bydef \min_{e\in \Phi_E}d_{ie}$. Because of the stationarity of the PPP, $\rho(x_i)$ is identically distributed for all $x_i$, and for simplicity we define $\rho$ to be 
  random variable which is identically distributed to $\rho(x_i)$ with probability density function (PDF) $\phi_{\rho}$.
  It is easy to show that $\bbE\{\rho^2\}= \frac{1}{\pi\lambda_E}$ \cite{Haenggi2005}. For the path-loss model, next, we discuss the sub-critical regime, and then follow it up with the super-critical regime.
  
\subsection{Sub-Critical Regime}
\label{sec:plsubc} In this section we are interested in obtaining a lower bound on $\lambda$ as a function of $\lambda_E$ for which the 
probability of percolation is zero.
  
Let  $D_m$ be a square box with side $2m$ centered at origin, i.e. $D_{m} = [-m \ m]\times [-m \ m]$. 
For $r>0$, consider any node $x_1\in \Phi\cap D_r$,\footnote{Without loss of generality we can assume that  $x_1$ is located at the origin.} and let ${\cal C}_{x_1}$ be its connected component.
Let  $x_L \in {\cal C}_{x_1}$ be the farthest node from $x_1$ in terms of Euclidean distance as shown in  Fig. \ref{fig:conn}. Let $r$  be chosen such that $x_L \in D_{10r}^c$. 

Let $A_{\cal B}(r)$ be the event that the maximum radius of connectivity of any node $x\in \Phi \cap {\cal B}$ is less than $r$, i.e. $A_{{\cal B}}(r) = \{\rho(x) \le r, \  \forall \ x  \in \Phi\cap {\cal B}\}$.
Let $B(q,r), \ q\in \bbR^2$, be the event that there is a path from a  node $x\in \Phi\cap q+D_r$ to a node  $y \in \Phi \cap q+D_{9r}\backslash q+D_{8r}$ with all the nodes on the path  between $x$ and $y$  lying inside $D_{10r}+q$, and the length of any edge of the path between  $x$ and $y$ is less than $r$. Note that due to stationarity $P(B(q,r)) = P(B({\bf 0},r))$.

In addition to the farthest node $x_L$ of ${\cal C}_{x_1}$ lying in $D_{10r}^c$, if $A_{D_{10r}}(r)$ also occurs, then there is a path from $x_1 \in D_r$ to node $y \in D_{9r}\backslash D_{8r}$  with all the nodes on the path  between $x$ and $y$  lying inside $D_{10r}$, since there is path between $x_1$ and $x_L$, and $\rho(x) \le r,   \forall \  x  \in \Phi\cap D_{10r}$, and the length of any edge of the path between  $x$ and $y$ is less than $r$. Therefore if $x_L \in D_{10r}^c$, and $A_{D_{10r}}(r)$ occurs, then $B({\bf 0},r)$ occurs. Hence the following proposition follows.

\begin{figure*}
\centering
\includegraphics[width=4.5in]{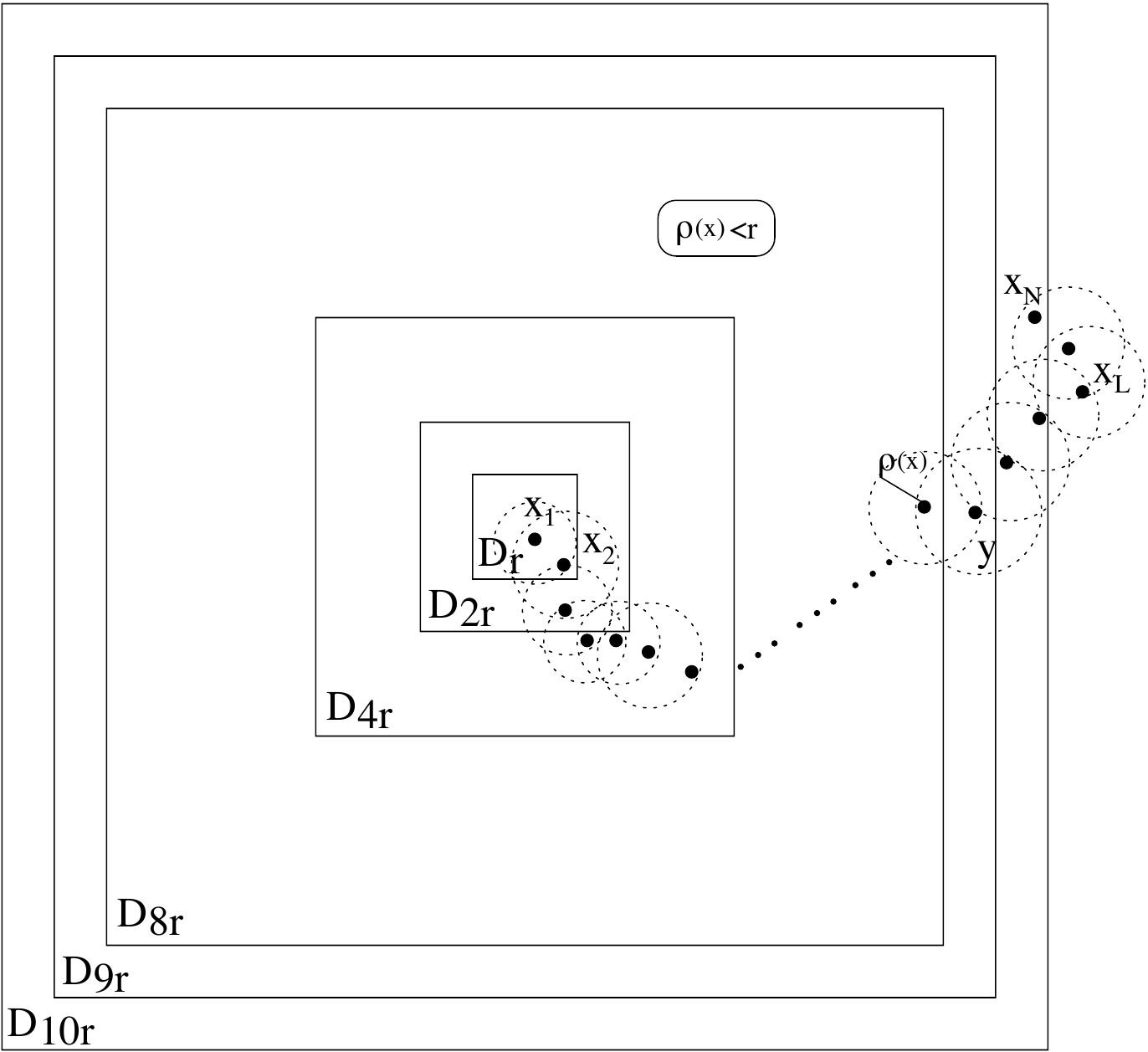}
\caption{Transmission capacity of the secondary network with multiple transmit and receive antennas.}
\label{fig:conn}
\end{figure*}

\begin{prop}\label{prop:transform}
\begin{equation}
P(x_L \in D_{10r}^c) \le P(B({\bf 0},r)) + P(A_{D_{10r}}(r)^c).
\end{equation}
\end{prop}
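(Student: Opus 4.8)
The plan is to realise the statement as a single event inclusion followed by the union bound, namely
\[
\{x_L \in D_{10r}^c\} \subseteq B(\b0,r) \cup A_{D_{10r}}(r)^c.
\]
Granting this inclusion, subadditivity of probability gives $P(x_L \in D_{10r}^c) \le P(B(\b0,r)) + P(A_{D_{10r}}(r)^c)$, which is exactly the claim. Thus the entire content of the proposition is the inclusion, and I would package the informal paragraph preceding the statement into a clean proof of it.

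To establish the inclusion I would decompose on whether $A_{D_{10r}}(r)$ holds. On the complement $A_{D_{10r}}(r)^c$ there is nothing to prove, so the work is confined to the event $\{x_L \in D_{10r}^c\} \cap A_{D_{10r}}(r)$, which I claim is contained in $B(\b0,r)$. Fix an outcome in this event. Since $x_L \in \cC_{x_1}$, there is a directed path $x_1 \rightarrow x_L$ in the PSG; I would walk along this path starting from $x_1 \in D_r$ and let $y$ be the first vertex of the path lying outside $D_{8r}$. Such a $y$ exists because the path terminates at $x_L \in D_{10r}^c \subseteq D_{8r}^c$, so the walk must eventually leave $D_{8r}$.

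The heart of the argument, and the step I expect to be the most delicate, is to check that this $y$ certifies $B(\b0,r)$. Every vertex strictly before $y$ lies in $D_{8r} \subseteq D_{10r}$, so under $A_{D_{10r}}(r)$ each such source vertex $x$ has $\rho(x)\le r$, whence every edge it originates has length at most $r$; in particular the prefix of the path up to $y$ has all its edges of length $\le r$ and all its interior vertices inside $D_{10r}$. The predecessor of $y$ lies in $D_{8r}=[-8r,8r]^2$ and is joined to $y$ by an edge of length $\le r$, so each coordinate of $y$ differs from the corresponding coordinate of a point of $[-8r,8r]^2$ by at most $r$, forcing $y \in D_{9r}=[-9r,9r]^2$; combined with $y \notin D_{8r}$ this yields $y \in D_{9r}\backslash D_{8r}$ and places the whole segment $x_1\rightarrow y$ inside $D_{10r}$. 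These are precisely the defining conditions of $B(\b0,r)$, so the inclusion holds.

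The only genuine subtlety is this box-geometry step: a jump of Euclidean length $\le r$ from a point of $[-8r,8r]^2$ can increase each coordinate by at most $r$, which is what keeps $y$ inside $D_{9r}$. This is where the specific nesting $D_r \subset D_{8r} \subset D_{9r} \subset D_{10r}$ is used, and it is worth stating explicitly so as not to conflate the square-box geometry with the Euclidean edge length. I would also reconcile the closed convention $\rho(x)\le r$ with the ``length less than $r$'' phrasing in the definition of $B(\b0,r)$; since the inequalities enter only through the coordinate containment above, either convention suffices and the bound is unaffected.
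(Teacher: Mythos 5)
Your proof is correct and takes essentially the same route as the paper: the paper's own justification (the paragraph preceding the proposition) is precisely the event inclusion $\{x_L \in D_{10r}^c\} \cap A_{D_{10r}}(r) \subseteq B({\bf 0},r)$ followed by the union bound. Your first-exit-vertex walk, the coordinate-wise box-geometry check placing $y$ in $D_{9r}\backslash D_{8r}$, and the remark on the strict-versus-nonstrict edge-length convention merely make rigorous the details the paper leaves implicit.
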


Note that $P(| {\cal C}_{x_1}|=\infty) \le \lim_{r \rightarrow \infty}P(x_L \in D_{10r}^c)$,  since 
infinitely many nodes of a PPP cannot lie in a finite region. 
It  easily follows that $P(A_{D_{10r}}(r)^c)\rightarrow 0$ as $r \rightarrow \infty$ (Proposition \ref{prop:a}).  Hence to 
show that $P(| {\cal C}_{x_1}|=\infty) = 0$ for $\lambda < \lambda_c$, it is sufficient to show that $P(B({\bf 0},r))$ goes to zero as $r \rightarrow \infty$ for $\lambda < \lambda_c$.

The main Theorem of this subsection is as follows.

\begin{thm}\label{thm:mainsubc}For $\lambda \le \frac{\pi\lambda_E}{4C^2}$,  where $C$ is a constant, $P(|{\cal C}_{x_1}| = \infty) =0$.
\end{thm}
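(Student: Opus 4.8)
The plan is to use the reduction already set up in the excerpt: by Proposition~\ref{prop:transform}, together with $P(A_{D_{10r}}(r)^c)\to 0$ and $P(|\mathcal{C}_{x_1}|=\infty)\le \lim_{r\to\infty}P(x_L\in D_{10r}^c)$, it suffices to show that $P(B(\mathbf{0},r))\to 0$ as $r\to\infty$ whenever $\lambda\le\frac{\pi\lambda_E}{4C^2}$. Thus the entire theorem collapses to proving that, in the stated density regime, the crossing event $B(\mathbf{0},r)$---a PSG path with every edge of length less than $r$ joining the inner box $D_r$ to the annulus $D_{9r}\setminus D_{8r}$ inside $D_{10r}$---becomes asymptotically unlikely.

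First I would reinterpret $B(\mathbf{0},r)$ as a crossing event for a continuum percolation model. Each legitimate node $x$ connects to $y$ exactly when $|x-y|\le\rho(x)=d(x,\Phi_E)$, so if I place at every $x\in\Phi$ the ball $\bB(x,\rho(x))$ and join two legitimate nodes whenever these balls intersect (equivalently $|x-y|\le\rho(x)+\rho(y)$), every directed PSG edge is retained, since $\rho(x)\le\rho(x)+\rho(y)$. Hence any path realizing $B(\mathbf{0},r)$ is a connected crossing of an annulus of width at least $7r$ in this symmetrized Boolean-type model; because two balls of radius $\rho$ interact at centre-distance $2\rho$, the governing second moment is $\bbE[(2\rho)^2]=4\bbE[\rho^2]=\frac{4}{\pi\lambda_E}$, using $\bbE\{\rho^2\}=\frac{1}{\pi\lambda_E}$. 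The aim is then to invoke the critical-intensity bound of \cite{Gourre2008}: there is a universal constant $C$ (depending only on the dimension) such that the Boolean model is subcritical as soon as $4C^2\lambda\,\bbE[\rho^2]\le 1$, which is exactly the hypothesis $\lambda\le\frac{\pi\lambda_E}{4C^2}$. In the subcritical regime every cluster is almost surely bounded, and the probability that a cluster crosses an annulus of diverging width tends to zero; applying this to the width-$7r$ crossing forced by $B(\mathbf{0},r)$, and summing the per-seed crossing estimate over the $O(\lambda r^2)$ points of $\Phi\cap D_r$ via the Mecke formula, gives $P(B(\mathbf{0},r))\to 0$.

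The hard part will be that the connection radii $\rho(x)=d(x,\Phi_E)$ are \emph{not} independent marks: they are all generated by the single eavesdropper process $\Phi_E$, so nearby legitimate nodes have strongly correlated radii and the model is not the i.i.d.-radius Boolean model to which \cite{Gourre2008} applies verbatim. This positive dependence makes the secrecy graph more connected than an independent random-connection model with the same marginal connection function $e^{-\lambda_E\pi|x-y|^2}$, so one cannot simply dominate it from above by an independent model. I expect the cleanest route around this is a direct first-moment computation: bounding $P(B(\mathbf{0},r))$ by the expected number of crossing self-avoiding PSG paths $y_0\to\cdots\to y_k$ with edge lengths $\ell_i=|y_i-y_{i+1}|<r$, the Mecke formula over $\Phi$ followed by the void probability of $\Phi_E$ yields the exact factor $\exp\!\big(-\lambda_E\,\nu(\bigcup_i\bB(y_i,\ell_i))\big)$, and the whole difficulty then concentrates in a lower bound on the union area $\nu(\bigcup_i\bB(y_i,\ell_i))$ along a crossing path. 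Establishing this geometric bound in the sharp form that reproduces the constant $C$ of \cite{Gourre2008}---rather than the lossy and wrongly-oriented estimate $\nu(\bigcup_i\bB(y_i,\ell_i))\le\sum_i\pi\ell_i^2$---is where the real work lies, and I would treat the directedness of the out-component and the staying-inside-$D_{10r}$ constraint as routine bookkeeping layered on top of this core estimate.
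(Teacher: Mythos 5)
Your reduction to showing $P(B(\mathbf{0},r))\to 0$ is exactly the paper's first step, and you correctly diagnose the obstruction: the radii $\rho(x)$ are all generated by the single process $\Phi_E$, so the result of \cite{Gourre2008} for i.i.d.-radius Boolean models cannot be invoked verbatim, nor can the secrecy graph be dominated from above by an independent model (the edge events are decreasing events of $\Phi_E$, hence positively associated). But from that point on your proposal stops short of a proof: both routes you sketch are left open. Route one (symmetrize and quote Gou\'er\'e's critical-intensity theorem) you yourself retract. Route two (first-moment counting of crossing paths via Mecke's formula and the void probability $\exp\left(-\lambda_E\,\nu\left(\bigcup_i\bB(y_i,\ell_i)\right)\right)$) hinges on a sharp lower bound for the union area along an arbitrary self-avoiding crossing path, which you explicitly defer as ``where the real work lies.'' That estimate is genuinely problematic --- a path can wiggle at scales far below $r$, making the balls overlap heavily --- and even granting it, one must still beat the entropy of $k$-hop paths (growing like $\lambda^{k}$ times volume factors), and it is not clear this competition can be won all the way up to the stated threshold $\lambda\le\frac{\pi\lambda_E}{4C^2}$ rather than merely for $\lambda$ sufficiently small. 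So the core of the theorem is missing from the proposal.

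What the paper actually does, and what you never arrive at, is to adapt Gou\'er\'e's \emph{multi-scale recursion} rather than his theorem, replacing independence of the marks by \emph{spatial} independence. By Lemma \ref{lemma:independence}, the crossing event $B(q,r)$ depends only on $\Phi\cap (q+D_{9r})$ and $\Phi_E\cap(q+D_{10r})$; covering the annuli $D_{10r}\backslash D_{9r}$ and $D_{81r}\backslash D_{80r}$ by finitely many boxes, a crossing at scale $10r$ together with the event $A_{D_{100r}}(r)$ forces two scale-$r$ crossings measurable with respect to the disjoint regions $D_{20r}$ and $D_{69r}^c$, which are independent because $\Phi$ and $\Phi_E$ are Poisson. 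This yields the key inequality $P(B(\mathbf{0},10r)) \le C_3 P(B(\mathbf{0},r))^2 + P(A_{D_{100r}}(r)^c)$ (Lemma \ref{lemma:main}). Feeding this into the purely analytic recursion lemma (Proposition \ref{prop:goure}, Lemma 3.7 of \cite{Gourre2008}), with the seed bounds $P(B(\mathbf{0},r))\le \lambda C_1 r^2$ (Proposition \ref{prop:b}) and $P(A_{D_{mr}}(r)^c)\le \lambda C_2^2\int_r^\infty s^2\phi_\rho(s)\,ds \rightarrow 0$ (Proposition \ref{prop:a}), and verifying the hypotheses $f\le\frac12$ on $[1,10]$ and $g\le\frac14$ exactly under $\lambda\le\frac{1}{4C^2\bbE\{\rho^2\}}=\frac{\pi\lambda_E}{4C^2}$ (Propositions \ref{prop:sizef} and \ref{prop:sizeg}), gives $P(B(\mathbf{0},r))\to0$ (Lemma \ref{lem:conv}) and hence the theorem. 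Note that your first-moment estimate plays only the role of the initialization of this scheme; it cannot substitute for the recursion itself.
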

\begin{proof} From Proposition \ref{prop:transform}, $ P(|{\cal C}_{x_1}| = \infty) \le \lim_{r \rightarrow \infty}P(B({\bf 0},r)) + P(A_{D_{10r}}(r)^c)$.  From Proposition \ref{prop:a} we get $\lim_{r \rightarrow \infty} P(A_{D_{10r}}(r)^c) = 0$, and from Lemma \ref{lem:conv}, for $\lambda \le \frac{1}{4C^2\bbE\{\rho^2\}} = \frac{\pi\lambda_E}{4C^2}$, since $\bbE\{\rho^2\} = \frac{1}{\pi\lambda_E}$,  $\lim_{r \rightarrow \infty} P(B({\bf 0},r)) = 0$.
\end{proof}

In the rest of the section, we prove  Proposition \ref{prop:a} and Lemma \ref{lem:conv} using ideas similar to \cite{Gourre2008}, where a lower bound on the critical density is derived for a random Boolean model. 
In a random Boolean model, nodes are spatially distributed as a PPP, and balls with i.i.d. radius are centered at each node of the PPP. The quantity of interest is the region spanned by the union of balls (also called the connected component). 
Secure percolation with the path-loss model is similar to the random Boolean model, since a legitimate node $x \in\Phi$ can connect to any other legitimate node within a radius $\rho(x)$ (radius of connectivity) that is determined by the nearest eavesdropper node. With secure percolation, however, the radii of connectivity of different legitimate nodes are not independent, and hence the proof of \cite{Gourre2008} does not apply directly. 


Next, we prove some intermediate results that are required for  proving Proposition \ref{prop:a} and Lemma \ref{lem:conv}.

\begin{prop}\label{prop:b} $P(B({\bf 0},r)) \le \lambda C_1 r^2$, where $C_1$ is a constant that only depends on the dimension of 
the PPP which in our case is two.
\end{prop}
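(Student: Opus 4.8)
The plan is to discard essentially all of the path structure in the definition of $B({\bf 0},r)$ and retain only the one necessary condition that is easy to control by a first-moment argument. Concretely, for the event $B({\bf 0},r)$ to occur there must exist a node $x \in \Phi \cap D_r$ serving as the starting endpoint of the path. Hence $B({\bf 0},r) \subseteq \{\Phi \cap D_r \neq \emptyset\}$, and it suffices to bound the probability of the right-hand event.

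Next I would use that $\Phi$ is a homogeneous PPP of intensity $\lambda$, so that the number of points of $\Phi$ in $D_r$ is Poisson with mean $\lambda\,\nu(D_r)$. Since $D_r = [-r,r]\times[-r,r]$ has area $\nu(D_r) = 4r^2$, we get
\begin{equation}
P(\Phi \cap D_r \neq \emptyset) = 1 - e^{-\lambda \nu(D_r)} = 1 - e^{-4\lambda r^2} \le 4\lambda r^2,
\end{equation}
where the final step is the elementary bound $1-e^{-t}\le t$ for $t \ge 0$ (equivalently, Markov's inequality applied to the point count). Combining this with the containment above yields $P(B({\bf 0},r)) \le 4\lambda r^2$, so the claim holds with $C_1 = 4$, a constant depending only on the area normalization of the box in dimension two.

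There is no real analytic obstacle here; the only point to get right is the modeling observation that the existence of the path forces at least one point of $\Phi$ into $D_r$. The elaborate requirements in the definition of $B({\bf 0},r)$ --- that the path reach the annulus $D_{9r}\backslash D_{8r}$, stay inside $D_{10r}$, and use edges shorter than $r$ --- are not needed for this crude upper bound; they will instead be exploited later, presumably in Lemma \ref{lem:conv}, where a genuine decay in $r$ must be extracted. One could alternatively anchor the bound on the endpoint $y$ lying in the annulus $D_{9r}\backslash D_{8r}$, whose area is $68r^2$, but anchoring on the start node in $D_r$ gives the tighter constant.
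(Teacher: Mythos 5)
Your proof is correct and follows essentially the same route as the paper: reduce $B({\bf 0},r)$ to the event that $\Phi$ has at least one point in a box of area proportional to $r^2$, then apply a first-moment (Markov/void-probability) bound to get $\lambda C_1 r^2$. The only cosmetic difference is that you anchor on the starting node in $D_r$ (giving $C_1=4$) while the paper anchors on $D_{10r}$ and bounds $P(\#(\Phi\cap D_{10r})>0)$ by its expectation, which yields the same form of bound with a larger constant.
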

\begin{proof} See Appendix \ref{app:propb}.\end{proof}

\begin{prop} \label{prop:a} For any $m\in \bbN$, $P(A_{D_{mr}}(r)^c)  \le \lambda C_2^2 \int_{r}^{\infty} s^2 \phi_{\rho}(s) ds$, where $C_2$ is a constant that only depends on $m$ and the dimension of 
the PPP, and  $P(A_{D_{mr}}(r)^c) \rightarrow 0$ as $r\rightarrow \infty$.
\end{prop}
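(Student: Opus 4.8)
The plan is to rewrite the complementary event concretely and then apply a first-moment (union) bound. By definition, $A_{D_{mr}}(r)^c$ is exactly the event that at least one legitimate node inside the box has radius of connectivity exceeding $r$, i.e. $A_{D_{mr}}(r)^c = \{\exists\, x \in \Phi \cap D_{mr} : \rho(x) > r\}$. Letting $N \bydef \#(\{x \in \Phi \cap D_{mr} : \rho(x) > r\})$, Markov's inequality gives $P(A_{D_{mr}}(r)^c) = P(N \ge 1) \le \bbE\{N\}$, so it suffices to bound the expected number of such nodes.

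To compute $\bbE\{N\}$ I would invoke Campbell's theorem for the PPP $\Phi$ of intensity $\lambda$. The crucial observation is that the eavesdropper process $\Phi_E$ is independent of $\Phi$, and $\rho(x)$ is a functional of $\Phi_E$ and the location $x$ only; hence conditioning on $x$ being a point of $\Phi$ does not alter the law of $\rho(x)$ (no Palm correction is needed), and $P(\rho(x) > r) = P(\rho > r) = \int_r^\infty \phi_{\rho}(s)\,ds$ for every fixed $x$. Campbell's formula then yields
\begin{equation}
\bbE\{N\} = \lambda \int_{D_{mr}} P(\rho > r)\,dx = \lambda\, \nu(D_{mr})\, \int_r^\infty \phi_{\rho}(s)\,ds = \lambda\,(2mr)^2 \int_r^\infty \phi_{\rho}(s)\,ds,
\end{equation}
using $\nu(D_{mr}) = (2mr)^2$ in dimension two.

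The last step converts the prefactor $r^2$ into the $s^2$-weighted tail integral claimed. Since $s \ge r$ throughout the domain of integration, we have $r^2 \le s^2$, so $r^2 \int_r^\infty \phi_{\rho}(s)\,ds \le \int_r^\infty s^2 \phi_{\rho}(s)\,ds$. Setting $C_2 \bydef 2m$ (which depends only on $m$ and the planar dimension) gives the desired bound $P(A_{D_{mr}}(r)^c) \le \lambda C_2^2 \int_r^\infty s^2 \phi_{\rho}(s)\,ds$. Finally, because $\bbE\{\rho^2\} = \int_0^\infty s^2 \phi_{\rho}(s)\,ds = \frac{1}{\pi\lambda_E} < \infty$, the tail $\int_r^\infty s^2 \phi_{\rho}(s)\,ds$ of this convergent integral tends to $0$ as $r \to \infty$, forcing $P(A_{D_{mr}}(r)^c) \to 0$.

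There is no serious obstacle here: the argument is a routine first-moment estimate. The only points demanding care are the justification that independence of $\Phi$ and $\Phi_E$ removes any Palm-theoretic subtlety in the Campbell computation, and keeping the constant $C_2$ genuinely free of $r$. I would also note that phrasing the bound in terms of $\phi_{\rho}$ and the second moment $\bbE\{\rho^2\}$ --- rather than the explicit $P(\rho>r)=e^{-\pi\lambda_E r^2}$ available in the path-loss case --- is deliberate, since the same estimate then carries over verbatim to the path-loss-plus-fading model as long as $\bbE\{\rho^2\}$ remains finite.
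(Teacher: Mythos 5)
Your proposal is correct and is essentially the paper's own argument: the paper also bounds $P(A_{D_{mr}}(r)^c)$ by the expected number of nodes of $\Phi\cap D_{mr}$ with $\rho(x)>r$ --- it just carries out the first-moment computation by conditioning on the Poisson count and applying the union bound with identically distributed radii, rather than invoking Markov plus Campbell's formula as you do, and then uses the same $r^2 \le s^2$ trick to pass to the $s^2$-weighted tail integral and finiteness of $\bbE\{\rho^2\}$ for the limit. Your explicit remark that independence of $\Phi$ and $\Phi_E$ removes any Palm-theoretic subtlety is a point the paper leaves implicit, but it does not change the substance of the proof.
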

\begin{proof}See Appendix \ref{app:propa}
\end{proof}
\begin{lemma}\label{lemma:independence} Event $B(q,r)$ only depends on $x \in \Phi\cap q+D_{9r}$, and $e \in \Phi_E\cap q+D_{10r}$.
\end{lemma}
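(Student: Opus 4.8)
The plan is to reduce everything to $q={\bf 0}$ by stationarity of the two independent PPPs, and then to establish the claimed dependence in two separate localization steps: confining the relevant legitimate nodes to $D_{9r}$, and confining the relevant eavesdroppers to $D_{10r}$. The eavesdropper step is essentially a single distance estimate, whereas the legitimate-node step is the real content of the lemma, since the defining event $B({\bf 0},r)$ a priori permits the path to use legitimate nodes anywhere in the larger box $D_{10r}$, yet the statement only references $\Phi\cap D_{9r}$.

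For the legitimate nodes, I would show that $B({\bf 0},r)$ is equivalent to the existence of a path from some $x\in\Phi\cap D_r$ to the annulus $D_{9r}\backslash D_{8r}$ whose nodes all lie in $D_{9r}$ (rather than merely in $D_{10r}$) and whose edges have length less than $r$. Given any path witnessing $B({\bf 0},r)$, I traverse it starting from $x\in D_r\subset D_{8r}$ and let $z$ be the first node that leaves $D_{8r}$; such a node exists because the terminal node $y$ lies outside $D_{8r}$. Its predecessor $w$ on the path is in $D_{8r}$, and since the edge length satisfies $|z-w|<r$, each coordinate of $z$ differs from that of $w\in[-8r,8r]^2$ by less than $r$, so $z\in[-9r,9r]^2=D_{9r}$. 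As $z\notin D_{8r}$ by construction, in fact $z\in D_{9r}\backslash D_{8r}$, and truncating the path at $z$ yields a path from $x$ to the annulus all of whose nodes lie in $D_{8r}\cup\{z\}\subset D_{9r}$. The converse inclusion is immediate since $D_{9r}\subset D_{10r}$, so $B({\bf 0},r)$ depends on $\Phi$ only through $\Phi\cap D_{9r}$.

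For the eavesdroppers, recall that an edge $(x_i,x_j)$ is present if and only if $d_{ij}\le\rho(x_i)=\min_{e\in\Phi_E}d_{ie}$, i.e.\ precisely when no eavesdropper is strictly closer to $x_i$ than $x_j$ is. On the confined path every source node $x_i$ lies in $D_{9r}$ and every edge has $d_{ij}<r$, so any eavesdropper capable of blocking such an edge lies within distance less than $r$ of a point of $D_{9r}$, hence in $D_{10r}$. Conversely, any $e\notin D_{10r}$ has some coordinate of absolute value exceeding $10r$, so its distance to any $x_i\in D_{9r}$ exceeds $10r-9r=r>d_{ij}$, and it can never block an edge of the path. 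Thus the presence or absence of each edge is determined by $\Phi_E\cap D_{10r}$, and combining this with the previous step shows that $B({\bf 0},r)$ is a function of $\Phi\cap D_{9r}$ and $\Phi_E\cap D_{10r}$; translating by $q$ gives the general statement. The main obstacle is the first step: one must verify that the slack between $D_{9r}$ and $D_{10r}$ in the definition is exactly enough to absorb a single edge of length less than $r$ at the moment the path first crosses the boundary of $D_{8r}$, which is what forces the truncation point to land inside the annulus $D_{9r}\backslash D_{8r}$.
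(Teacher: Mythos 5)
Your proof is correct and follows essentially the same two-step localization as the paper: first confine the legitimate nodes relevant to $B(q,r)$ to $q+D_{9r}$, then use the edge-length bound $<r$ (worst case: a node arbitrarily close to the boundary of $q+D_{9r}$) to confine the relevant eavesdroppers to $q+D_{10r}$. The only difference is that the paper asserts the first step as ``clear,'' while you justify it with the truncation-at-first-exit-from-$D_{8r}$ argument --- a worthwhile addition, since the definition of $B(q,r)$ a priori permits path nodes in $D_{10r}\backslash D_{9r}$.
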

\begin{proof} By definition, $B(q,r)$ is  the event that there is a path from a  node $x\in \Phi\cap q+D_r$ to a node  $y \in \Phi \cap q+D_{9r}\backslash q+D_{8r}$ with all the nodes on the path  between $x$ and $y$  lying inside $D_{10r}+q$, and the length of any edge of the path between  $x$ and $y$ is less than $r$. Thus, clearly, $B(q,r)$ only depends on $x \in \Phi\cap q+D_{9r}$. Moreover, since length of each edge of the path between  $x$ and $y$ is less than $r$, the event that a point $x \in \Phi\cap q+D_{9r}$ has an edge to $y\in \Phi\cap q+D_{9r}$ only depends on
$e \in \Phi_E\cap \bB(x,r)$. 
In the worst case, $x$ can be arbitrarily close to the boundary of $q+D_{9r}$, hence the event $B(q,r)$ only depends on $e \in \Phi_E\cap q+D_{10r}$.
\end{proof}


\begin{lemma}\label{lemma:main} $P(B({\bf 0},10r)) \le C_3 P(B({{\bf 0},r}))^2 + P(A_{D_{100r}}(r)^c)$, where $C_3$ is a constant that only depends on the dimension of the PPP. 
\end{lemma}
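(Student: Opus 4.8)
The plan is to establish this sub-multiplicative inequality by a renormalization argument in the spirit of \cite{Gourre2008}: a crossing at scale $10r$ (the event $B({\bf 0},10r)$) should force, once we condition on the radii of connectivity being short, two \emph{disjoint} crossings at scale $r$, which are independent by Lemma \ref{lemma:independence} and therefore contribute the factor $P(B({\bf 0},r))^2$.

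First I would reduce to short edges. On the event $A_{D_{100r}}(r)$ every node of $\Phi$ lying in $D_{100r}$ has radius of connectivity at most $r$, so every edge emanating from such a node has length at most $r$. Since the path realizing $B({\bf 0},10r)$ has all its nodes in $D_{100r}$, on $B({\bf 0},10r)\cap A_{D_{100r}}(r)$ this path uses only edges of length at most $r$. Writing
\[
P(B({\bf 0},10r)) \le P\big(B({\bf 0},10r)\cap A_{D_{100r}}(r)\big) + P\big(A_{D_{100r}}(r)^c\big),
\]
the second term already matches the error term in the statement, so it remains to bound the first term by $C_3\,P(B({\bf 0},r))^2$.

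The core geometric step is to prove the inclusion
\[
B({\bf 0},10r)\cap A_{D_{100r}}(r) \ \subseteq\ \bigcup_{(q_1,q_2)\in\mathcal{G}} \big(B(q_1,r)\cap B(q_2,r)\big),
\]
where $\mathcal{G}$ is the finite set of pairs of points of the lattice $r\bbZ^2\cap D_{100r}$ for which $q_1+D_{10r}$ and $q_2+D_{10r}$ are disjoint. To produce such a pair from a realization I would take the path $z_0=x,z_1,\dots,z_k=y$ with $x\in D_{10r}$, $y\in D_{90r}\setminus D_{80r}$, all $z_j\in D_{100r}$ and $\|z_{j+1}-z_j\|_\infty\le r$. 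Let $q_1$ be the lattice point nearest $z_0$ (so $z_0\in q_1+D_r$ and $\|q_1\|_\infty\le 11r$), and let $q_2$ be the lattice point nearest the first node with $\|z_j\|_\infty\ge 50r$, which exists because $\|z_j\|_\infty$ changes by at most $r$ per step and runs from below $10r$ to above $80r$; then $\|q_2\|_\infty\ge 49r$, so $\|q_1-q_2\|_\infty>20r$ and the boxes $q_i+D_{10r}$ are indeed disjoint. For each $i$, following the path outward from the node near $q_i$ up to its \emph{first exit} of $q_i+D_{8r}$ yields a sub-path whose terminal node lies in $q_i+(D_{9r}\setminus D_{8r})$ (the last step moves by at most $r$) and all of whose nodes lie in $q_i+D_{9r}\subseteq q_i+D_{10r}$, using edges of length at most $r$; this is exactly $B(q_i,r)$. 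The endpoint $y$ is far enough from each $q_i$ ($\|y-q_i\|_\infty>8r$) to guarantee that the exit occurs.

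Finally I would conclude by independence and a union bound. For each pair in $\mathcal{G}$, Lemma \ref{lemma:independence} shows that $B(q_1,r)$ depends only on the restriction of $\Phi$ and $\Phi_E$ to $q_1+D_{10r}$ and $B(q_2,r)$ only to $q_2+D_{10r}$; since these boxes are disjoint and $\Phi,\Phi_E$ are independent PPPs, the two events are independent, and by stationarity each has probability $P(B({\bf 0},r))$, so $P(B(q_1,r)\cap B(q_2,r))=P(B({\bf 0},r))^2$. Summing over the $C_3:=\#(\mathcal{G})$ pairs gives the claim; because the lattice spacing and the box $D_{100r}$ both scale linearly in $r$, the count $C_3$ is a pure (dimension-dependent) constant. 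I expect the main obstacle to be the geometric step: one must check carefully that the long crossing, after reduction to short edges, genuinely contains two crossings of the prescribed form $B(\cdot,r)$ with disjoint dependency boxes — in particular that the two selected lattice boxes can always be taken disjoint and that the first-exit sub-paths meet every constraint in the definition of $B(\cdot,r)$.
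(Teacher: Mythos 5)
Your proposal is correct and follows essentially the same approach as the paper's proof: both reduce to short edges via $A_{D_{100r}}(r)$, extract from the long crossing two scale-$r$ crossings of the form $B(\cdot,r)$ whose dependency boxes are disjoint, and conclude via Lemma~\ref{lemma:independence} (independence of the PPPs over disjoint regions) plus a union bound over a finite, $r$-independent family of translates, which yields the constant $C_3$. The differences are only bookkeeping --- the paper covers the two annuli $D_{10r}\setminus D_{9r}$ and $D_{81r}\setminus D_{80r}$ by finitely many boxes and multiplies the probabilities of the two resulting union events, whereas you union over pairs of disjoint lattice boxes; incidentally, placing your second box near sup-norm $50r$ keeps both extracted sub-paths directed forward along the original path, which handles the orientation of the (directed) secrecy graph a bit more cleanly than the paper's second family near $D_{80r}$.
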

\begin{proof} See Appendix \ref{app:lemmain}.
\end{proof}

Recall that if we can show that $P(B({\bf 0},10r)) \rightarrow 0$ as $r \rightarrow \infty$, $P(|{\cal C}_{x_1}| = \infty) =0 $ follows. To show that 
 $P(B({\bf 0},10r)) \rightarrow 0$ as $r \rightarrow \infty$, we need the following result from \cite{Gourre2008}.

\begin{prop}\label{prop:goure} 
Let $f$ and $g$ be two measurable, bounded and non-negative functions from $[1, +\infty]$ to $\bbR^+$. If $f(x) \le 1/2$ for $x \in [1,10]$, and $g(x) \le 1/4$ for $x \in [1, +\infty]$, and $f(x) \le f(x/10)^2 + g(x)$ for $x\ge 10$, then $f(x)$ converges to $0$ as $x \rightarrow \infty$ whenever $g(x)$ converges to $0$ as $x \rightarrow \infty$.  
\end{prop}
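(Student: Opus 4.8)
The plan is to first upgrade the two one-sided hypotheses into a single uniform bound $f(x)\le \tfrac12$ valid for all $x\ge 1$, and then to extract the conclusion by passing to the limit superior in the functional inequality.

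\textbf{Step 1 (uniform bound).} I claim $f(x)\le \tfrac12$ for every $x\ge 1$, proved by induction over the base-$10$ scales $[10^{k},10^{k+1}]$. The base case $x\in[1,10]$ is the hypothesis $f\le\tfrac12$. Assuming $f(y)\le\tfrac12$ for all $y\in[1,10^{k}]$ (with $k\ge 1$), take $x\in[10^{k},10^{k+1}]$; then $x/10\in[10^{k-1},10^{k}]\subseteq[1,10^{k}]$, so the inductive hypothesis gives $f(x/10)\le\tfrac12$, and the recursion together with $g\le\tfrac14$ yields $f(x)\le f(x/10)^2+g(x)\le \tfrac14+\tfrac14=\tfrac12$. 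This closes the induction, so $f\le\tfrac12$ on all of $[1,\infty)$.

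\textbf{Step 2 (limsup of the recursion).} Set $L:=\limsup_{x\to\infty}f(x)$; by Step 1, $0\le L\le \tfrac12$. I would take $\limsup_{x\to\infty}$ on both sides of $f(x)\le f(x/10)^2+g(x)$. Since $g(x)\to 0$, the additive term contributes nothing, so $\limsup_{x\to\infty}\big(f(x/10)^2+g(x)\big)=\limsup_{x\to\infty}f(x/10)^2$. The key small lemma is that, for a bounded non-negative function, limsup commutes with the continuous increasing map $t\mapsto t^2$ and is invariant under the rescaling $x\mapsto x/10$ of the argument; both facts follow from $\limsup_{x\to\infty}h(x)=\inf_{T}\sup_{x\ge T}h(x)$, using $\sup_{x\ge T}f(x/10)=\sup_{u\ge T/10}f(u)$ for the rescaling. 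Hence $\limsup_{x\to\infty}f(x/10)^2=\big(\limsup_{x\to\infty}f(x)\big)^2=L^2$, and we obtain $L\le L^2$.

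\textbf{Step 3 (conclusion).} Combining $L\le L^2$ with $L\le \tfrac12$ finishes the argument: writing $L\le L^2$ as $L(1-L)\le 0$ and noting $1-L\ge \tfrac12>0$ forces $L\le 0$, hence $L=0$. Since $f\ge 0$, $\limsup f=0$ gives $\lim_{x\to\infty}f(x)=0$, as required. The step I expect to require the most care is Step 2, specifically the interchange $\limsup_{x}f(x/10)^2=(\limsup_{x}f(x))^2$: this is exactly where both the non-negativity of $f$ (so squaring is monotone) and the boundedness from Step 1 (so $L<\infty$ and $L\le L^2$ is not vacuous) are used. A fully elementary alternative avoiding the interchange is to fix $\epsilon\in(0,\tfrac14)$, choose $T$ with $g\le\epsilon$ on $[T,\infty)$, and track $S_k:=\sup_{x\ge 10^{k}T}f(x)$, which satisfies $S_{k+1}\le S_k^2+\epsilon$ with $S_0\le\tfrac12$; the scalar map $t\mapsto t^2+\epsilon$ drives $S_k$ to its smaller fixed point $\tfrac12\big(1-\sqrt{1-4\epsilon}\big)$, which tends to $0$ as $\epsilon\to 0$, again giving $\limsup f=0$.
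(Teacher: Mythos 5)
Your proposal is correct, but note that the paper itself offers no argument for this proposition at all: its ``proof'' is just a pointer to Lemma 3.7 of the cited work of Gou\'er\'e on the Poisson Boolean model. So what you have produced is a self-contained replacement for that black box, and every step checks out. The induction in Step 1 is sound: for $x\in[10^{k},10^{k+1}]$ with $k\ge 1$ we have $x\ge 10$, so the recursion applies, and $x/10$ lies in the range covered by the inductive hypothesis; incidentally this makes the boundedness hypothesis on $f$ redundant. The interchange in Step 2 is justified exactly as you indicate: $\sup_{x\ge T}f(x/10)^2=\bigl(\sup_{u\ge T/10}f(u)\bigr)^2$ because squaring is increasing on nonnegative reals, and letting $T\to\infty$ uses only continuity of $t\mapsto t^2$ and the fact that tail suprema decrease to the limsup; combined with $g\to 0$ this gives $L\le L^2$, and Step 3's observation that $L\le \tfrac12$ excludes the fixed point at $1$ finishes the proof. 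Of your two routes, the limsup argument is the slicker one, while the iteration argument (comparing $S_k=\sup_{x\ge 10^kT}f(x)$ with the orbit of $t\mapsto t^2+\epsilon$ started at $\tfrac12$, which decreases to the smaller fixed point $\tfrac12\bigl(1-\sqrt{1-4\epsilon}\bigr)$) buys something extra: it is quantitative, yielding $\limsup_{x\to\infty} f(x)\le \tfrac12\bigl(1-\sqrt{1-4\epsilon}\bigr)$ whenever $g\le\epsilon$ on a tail, which would give explicit decay of $f$ in terms of the decay of $g$ — information the qualitative limsup route discards. Either version would serve as a legitimate stand-alone proof of the proposition, which the paper, as written, delegates entirely to the reference.
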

\begin{proof}See Lemma $3.7$ \cite{Gourre2008}.
\end{proof}

Let $M= \frac{(\bbE\{\rho^2\})^{1/2}}{10}$, $f(r) \bydef C P(B({\bf 0}, Mr))$, and $g(r) \bydef \lambda C^2 \int_{\frac{Mr}{10}}^{\infty} s^2 \phi_\rho(s)ds$, where $C = \max\{C_1, C_2,C_3\}$. 
Then the following is true.
\begin{prop}\label{prop:sizef} For $\lambda \le \frac{1}{4C^2\bbE\{\rho^2\}}$, $f(r) \le \frac{1}{2}$ for $r\in [1,10]$.
\end{prop}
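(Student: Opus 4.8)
The plan is to bound $f(r)$ directly from Proposition~\ref{prop:b}, which already controls $P(B({\bf 0},\cdot))$ linearly in the square of the box scale. By definition $f(r) = C\,P(B({\bf 0},Mr))$, so I would apply Proposition~\ref{prop:b} at box scale $Mr$ to get $P(B({\bf 0},Mr)) \le \lambda C_1 (Mr)^2$. Since $C = \max\{C_1,C_2,C_3\} \ge C_1$, multiplying through by $C$ gives $f(r) \le C\,C_1\,\lambda (Mr)^2 \le C^2 \lambda (Mr)^2$.

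Next I would substitute the defining value $M = (\bbE\{\rho^2\})^{1/2}/10$, so that $(Mr)^2 = \bbE\{\rho^2\}\,r^2/100$. The decisive observation is that on the interval $r\in[1,10]$ we have $r^2 \le 100$, which exactly cancels the factor $100$ coming from $M^2$ and yields $f(r) \le C^2\lambda\,\bbE\{\rho^2\}$ uniformly on $[1,10]$. Finally, under the hypothesis $\lambda \le \frac{1}{4C^2\bbE\{\rho^2\}}$, the product $C^2\lambda\,\bbE\{\rho^2\}$ is at most $1/4$, whence $f(r) \le 1/4 \le 1/2$ on $[1,10]$, which is the claim.

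There is no genuine obstacle here: the statement is a purely quantitative consequence of Proposition~\ref{prop:b}, deliberately engineered so that the choice of scaling constant $M$ and the threshold on $\lambda$ make the bound tight up to a factor of two. The only points that need care are that $C$ dominates $C_1$ (so that $C\,C_1 \le C^2$) and the cancellation of the numerical factor $100$ against the range $r^2 \le 100$; both are immediate. This clean estimate is exactly what is needed to verify the first hypothesis ($f(x)\le 1/2$ on $[1,10]$) of Proposition~\ref{prop:goure}, which combined with the recursion of Lemma~\ref{lemma:main} and the vanishing of $g$ will force $P(B({\bf 0},10r))\to 0$.
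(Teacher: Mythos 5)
Your proof is correct and follows essentially the same route as the paper: both apply Proposition~\ref{prop:b} at scale $Mr$, use $C \ge C_1$ to get the bound $f(r) \le \lambda C^2 M^2 r^2$, substitute $M^2 = \bbE\{\rho^2\}/100$, and observe that $r^2 \le 100$ on $[1,10]$ combined with the hypothesis $\lambda \le \frac{1}{4C^2\bbE\{\rho^2\}}$ gives $f(r) \le 1/4 \le 1/2$. The only cosmetic difference is that the paper keeps the factor $\left(\frac{r}{10}\right)^2$ explicit rather than absorbing it immediately, which changes nothing.
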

\begin{proof}
From Proposition \ref{prop:b}, $ C P(B({\bf 0}, Mr)) \le \lambda C^2 M^2 r ^2$, which using the definition of $M$ is $\le  \lambda C^2 \bbE\{\rho^2\} \left(\frac{r}{10}\right)^2$, from which the result follows.
\end{proof}

\begin{prop}\label{prop:sizeg} For $\lambda \le \frac{1}{4C^2\bbE\{\rho^2\}}$, $g(r) \le \frac{1}{4}, \ \forall \ r \in [1, +\infty]$.
\end{prop}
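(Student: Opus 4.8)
The plan is to bound $g(r)$ uniformly in $r$ by observing that it is essentially a tail of the second-moment integral of $\rho$. Recall
\[
g(r) = \lambda C^2 \int_{\frac{Mr}{10}}^{\infty} s^2 \phi_\rho(s)\,ds,
\]
and that $\phi_\rho$ is the PDF of the connectivity radius $\rho$, so that $\int_0^\infty s^2 \phi_\rho(s)\,ds = \bbE\{\rho^2\}$. The single key step is that the integrand $s^2 \phi_\rho(s)$ is non-negative, so integrating only over the tail $[\tfrac{Mr}{10}, \infty)$ can only shrink the integral relative to integrating over all of $[0,\infty)$.

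Concretely, first I would write, for every $r \ge 1$ (indeed for every $r$, since the lower limit $\tfrac{Mr}{10}$ is non-negative),
\[
\int_{\frac{Mr}{10}}^{\infty} s^2 \phi_\rho(s)\,ds \;\le\; \int_{0}^{\infty} s^2 \phi_\rho(s)\,ds \;=\; \bbE\{\rho^2\}.
\]
Multiplying through by $\lambda C^2$ then gives the \emph{uniform} bound $g(r) \le \lambda C^2 \bbE\{\rho^2\}$, with no dependence on $r$. The second step is simply to substitute the hypothesis $\lambda \le \frac{1}{4C^2 \bbE\{\rho^2\}}$, which yields
\[
g(r) \;\le\; \lambda C^2 \bbE\{\rho^2\} \;\le\; \frac{1}{4C^2 \bbE\{\rho^2\}}\, C^2 \bbE\{\rho^2\} \;=\; \frac{1}{4},
\]
valid for all $r \in [1,+\infty]$ as required.

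There is no genuine obstacle here; the argument is a one-line domination of a tail integral by the full second moment, mirroring the structure of the proof of Proposition \ref{prop:sizef}. The only point worth emphasizing is that the uniformity over $r$ is automatic: because we discard the lower limit $\tfrac{Mr}{10}$ entirely when passing to the full integral, the resulting bound $\lambda C^2 \bbE\{\rho^2\}$ is independent of $r$, so the inequality $g(r) \le \tfrac14$ holds simultaneously on the whole ray $[1,+\infty]$. This, together with Proposition \ref{prop:sizef} and the monotone structure supplied by Lemma \ref{lemma:main}, sets up the hypotheses needed to invoke Proposition \ref{prop:goure} and conclude $P(B(\mathbf{0},r)) \to 0$.
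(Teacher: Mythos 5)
Your proof is correct and is essentially identical to the paper's own argument: both bound the tail integral $\int_{Mr/10}^{\infty} s^2\phi_\rho(s)\,ds$ by the full second moment $\bbE\{\rho^2\}$ and then substitute the hypothesis $\lambda \le \frac{1}{4C^2\bbE\{\rho^2\}}$ to get $g(r) \le \frac{1}{4}$ uniformly in $r$. If anything, your write-up is cleaner, since the paper's one-line proof contains a typo (it states the condition as $\lambda \le \frac{1}{4C\bbE\{\rho^2\}}$ rather than with $C^2$), which your version correctly avoids.
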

\begin{proof}
Note that $\int_{\frac{Mr}{10}}^{\infty} s^2 \phi_\rho(s)ds \le \bbE\{\rho^2\}$, hence  $g(r) \le \frac{1}{4}, \ \forall \ r$, when $\lambda \le \frac{1}{4C\bbE\{\rho^2\}}$.
\end{proof}

\begin{lemma}\label{lem:conv} For $\lambda \le \frac{1}{4C^2\bbE\{\rho^2\}}$, $f(r) \rightarrow 0$ and $P(B({{\bf 0},r})) \rightarrow 0$ as $r \rightarrow \infty$. 
\end{lemma}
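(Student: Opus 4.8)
The plan is to recognize that Lemma~\ref{lem:conv} is nothing more than an application of the abstract convergence criterion of Proposition~\ref{prop:goure} to the pair of functions $f$ and $g$ defined just above it. Thus the entire task reduces to verifying the three hypotheses of that proposition and then translating its conclusion back into a statement about $P(B(\mathbf{0},r))$.

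First I would record that $g(r)\to 0$ as $r\to\infty$. Since $\bbE\{\rho^2\}=\int_0^\infty s^2\phi_\rho(s)\,ds<\infty$, the tail integral $\int_{Mr/10}^\infty s^2\phi_\rho(s)\,ds$ vanishes as $r\to\infty$, so $g(r)\to 0$. Both $f$ and $g$ are non-negative, measurable, and bounded (being constant multiples of probabilities and of a convergent tail integral), so they lie in the class to which Proposition~\ref{prop:goure} applies. The two boundary bounds required by that proposition, namely $f(r)\le \frac12$ on $[1,10]$ and $g(r)\le \frac14$ on $[1,+\infty]$, are precisely Propositions~\ref{prop:sizef} and~\ref{prop:sizeg}, both valid under the standing hypothesis $\lambda\le \frac{1}{4C^2\bbE\{\rho^2\}}$.

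The substance of the argument is establishing the functional recursion $f(r)\le f(r/10)^2+g(r)$ for $r\ge 10$, and here the scaling must be aligned carefully. I would substitute $r\mapsto Mr/10$ into Lemma~\ref{lemma:main}: because $10\cdot (Mr/10)=Mr$, the left-hand side becomes $P(B(\mathbf{0},Mr))$, and multiplying through by $C$ produces $f(r)$ on the left. The quadratic term becomes $C\,C_3\,P(B(\mathbf{0},Mr/10))^2=(C_3/C)\,f(r/10)^2\le f(r/10)^2$, using $P(B(\mathbf{0},Mr/10))=f(r/10)/C$ together with $C_3\le C$. The residual term $C\,P(A_{D_{10Mr}}(Mr/10)^c)$ is controlled by Proposition~\ref{prop:a} applied with radius $Mr/10$ and $m=100$ (since $D_{10Mr}=D_{100\cdot(Mr/10)}$), giving a bound of the form $C\lambda C_2^2\int_{Mr/10}^\infty s^2\phi_\rho(s)\,ds$, which the definitions of $C=\max\{C_1,C_2,C_3\}$ and of $g$ identify with $g(r)$. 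This delivers the required recursion.

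With all hypotheses verified, Proposition~\ref{prop:goure} yields $f(r)\to 0$ as $r\to\infty$. Finally, since $f(r)=C\,P(B(\mathbf{0},Mr))$ with $C$ a fixed constant, $f(r)\to 0$ forces $P(B(\mathbf{0},Mr))\to 0$, and rescaling the argument $Mr$ back to $r$ gives $P(B(\mathbf{0},r))\to 0$ as $r\to\infty$. I expect the main obstacle to be the bookkeeping in the recursion step, in particular matching the factor-of-$10$ dilation built into Lemma~\ref{lemma:main} with the $x\mapsto x/10$ structure demanded by Proposition~\ref{prop:goure}, and confirming that the constants $C_2,C_3$ are absorbed correctly through the single constant $C$ and the definitions of $f$ and $g$.
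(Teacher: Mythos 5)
Your proposal is correct and follows essentially the same route as the paper: the paper's proof likewise combines the recursion $f(r) \le f(r/10)^2 + g(r)$ obtained from Lemma~\ref{lemma:main} with the bounds of Propositions~\ref{prop:sizef} and~\ref{prop:sizeg}, and then invokes Proposition~\ref{prop:goure} to conclude $f(r) \rightarrow 0$ and hence $P(B({\bf 0},r)) \rightarrow 0$. The only difference is one of detail, not of substance: you make explicit the scaling substitution $r \mapsto Mr/10$, the verification that $g(r) \rightarrow 0$, and the absorption of the constants $C_2, C_3$ into $C$, all of which the paper leaves implicit.
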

\begin{proof} From Lemma \ref{lemma:main},  $f(r) \le f(r/10)^2+ g(r)$, and from Proposition \ref{prop:sizef} and \ref{prop:sizeg}, $f(r) \le \frac{1}{2}$ for $r \in [1,10]$ and $g(r) \le \frac{1}{4} \ \forall \ r$. Hence using Proposition \ref{prop:goure} it follows that $f(r)  \rightarrow 0$ and consequently $P(B({{\bf 0},r})) \rightarrow 0$ as $r \rightarrow \infty$. 
\end{proof}

{\bf Discussion:} In this section we obtained an universal lower bound on the critical intensity $\lambda_c$ required for percolation with  the path-loss model. 
Our proof is an adaptation of \cite{Gourre2008}, for the non-independent radii of connectivity. Note that a lower bound on $\lambda_c$ has been obtained in \cite{Pinto2010} for the path-loss model, however, our lower bound is universal, i.e. the constant in our lower bound does not change with the choice of $\lambda_E$ as was the case in \cite{Pinto2010}.
The 
main idea behind the proof is that if $\lambda$ is below a threshold (the derived lower bound), the probability that there is a path between two legitimate nodes at a distance $r$ from each other  goes to zero as $r \rightarrow \infty$. Therefore with probability one,  if $\lambda$ is below a threshold, the connected component of any node lies inside a bounded region, and since infinitely many nodes of a PPP do not lie in a bounded region, the connected component of any node is finite. 

\subsection{Super-Critical Regime}
\label{sec:plsupc}
In this section we derive an upper bound on $\lambda_c$ for the path-loss model using a different approach compared to \cite{Pinto2010}. Our upper bounding technique is applicable for both the path-loss as well as the path-loss plus fading model, while the upper bound derived in \cite{Pinto2010} is valid only for the path-loss model. Before deriving the upper bound, we briefly discuss the approach of \cite{Pinto2010}.
The upper bound on $\lambda_c$ for the path-loss model has been derived in \cite{Pinto2010} by coupling the continuum percolation on the PPP to 
the discrete lattice percolation. The corresponding discrete lattice is a lattice on $\bbR^2$ with edge length $\psi$, where an edge is defined to be open if there is at least one node of $\Phi$ inside each square on either side of the edge and there is no node of 
$\Phi_E$ in an union of circles of radius (a function of $\psi$) around the edge. The analysis in  \cite{Pinto2010} critically depended on the fact that the two legitimate nodes can connect if the distance between them is less compared to their nearest eavesdropper. Since with the path-loss plus fading model,  two legitimate nodes can connect even if the distance between them is mode compared to their nearest eavesdropper, the upper bound obtained in \cite{Pinto2010} does not apply to the path-loss plus fading model.

Our upper bound on $\lambda_c$ for the path-loss model is summarized in the next Theorem.
\begin{thm} \label{thm:plsupc} For the path-loss model, $ \exists \ \epsilon \in (0,1), N_1\in \bbN$ for which 
$P(|{\cal C}_{x_1}| = \infty) >0$ if  
 $\lambda > \frac{\lambda_E}{1-(1-\epsilon)e^{-\lambda_E\pi N_1^2}}$.
\end{thm}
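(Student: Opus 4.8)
The plan is to prove percolation by a coarse-graining (renormalization) argument that couples the continuum PSG to a bond percolation model on $\bbZ^2$, and then to show that for $\lambda$ above the stated threshold the induced discrete model is supercritical. This is the reverse of the construction in Section~\ref{sec:plsubc}: there one forced $P(B({\bf 0},r))\to 0$ to preclude long-range connections, whereas here one builds enough short-range secure connections to guarantee an infinite cluster. The coupling will be arranged so that an infinite open cluster in $\bbZ^2$ \emph{witnesses} a genuine connected component of legitimate nodes in the PSG; hence supercriticality of the discrete model will give $P(|\cC_{x_1}|=\infty)>0$ directly.

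Concretely, partition $\bbR^2$ into a square lattice of cells of side $\ell$, with $\ell$ a fixed multiple of $N_1$ chosen so that any legitimate node in a cell and any legitimate node in a lattice-adjacent cell lie at Euclidean distance at most $N_1$. I declare an edge of $\bbZ^2$ \emph{open} when (i) each of the two incident cells contains at least one legitimate node, and (ii) those nodes can connect securely, which in the path-loss model amounts to the controlling disk of radius at most $N_1$ being free of eavesdroppers, so that the connection radius $\rho$ exceeds the inter-node distance. Because $\Phi$ and $\Phi_E$ are independent PPPs, the probability of (i) is governed by the void probability $1-e^{-\lambda \ell^2}$ and the probability of the favourable eavesdropper configuration in (ii) by the void probability $e^{-\lambda_E \pi N_1^2}$ of a radius-$N_1$ disk. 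Combining these contributions, optimizing over the cell geometry, and letting $\epsilon\in(0,1)$ absorb the gap between the resulting per-edge open probability and the critical probability $p_c$ of bond percolation on $\bbZ^2$, is what yields the closed-form threshold $\lambda>\lambda_E/\bigl(1-(1-\epsilon)e^{-\lambda_E\pi N_1^2}\bigr)$; one verifies that the per-edge open probability crosses $p_c$ exactly in this range, using that condition (i) becomes automatic as $\lambda$ grows while the eavesdropper term stays fixed at the scale $N_1$. This coupling is in the same spirit as \cite{Pinto2010}, but the open-edge event is designed to remain meaningful even when a connection can exceed the nearest-eavesdropper distance, which is what will let the same scheme carry over to the fading model of the next section.

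Since requiring a single eavesdropper-free disk around \emph{every} node would force an open probability decaying like $e^{-\lambda_E\pi N_1^2}$, which cannot exceed $p_c$ when $\lambda_E$ is large, such a disk cannot be imposed uniformly; instead the open-edge event should be phrased as a \emph{crossing} event of the cell pair by a short-hop secure path, so that when legitimate nodes are dense one may route around occupied eavesdropper neighbourhoods. Quantifying this crossing probability, and matching it to the stated algebraic form, is the bookkeeping part of the argument.

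\textbf{The main obstacle} is the dependence between edges. As already noted before Proposition~\ref{prop:b}, the connection radii $\rho(x)$ of different legitimate nodes are \emph{not} independent, since they are all determined by the common process $\Phi_E$; consequently distinct open-edge events share eavesdroppers and the induced bond process is not Bernoulli, so the classical lattice threshold cannot be invoked verbatim. I would resolve this exactly as Lemma~\ref{lemma:independence} does in the subcritical analysis, namely by localizing each edge's witness to eavesdroppers in a bounded region (sets of the $\Phi_E\cap q+D_{10r}$ type), so that the field of open-edge indicators is finite-range, i.e.\ $k$-dependent. A Liggett--Schonmann--Stacey type stochastic-domination result then shows that, once each edge is open with probability close enough to $1$, the dependent field dominates a supercritical Bernoulli bond percolation, which suffices. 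Two further points need care: the secrecy graph is directed, so I would require \emph{bidirectional} connection (a common eavesdropper-free neighbourhood for both endpoints, symmetric in $x$ and $y$), ensuring that a discrete open path lifts to a genuine directed path and hence to an unbounded out-component $\cC_{x_1}$; and positivity of $P(|\cC_{x_1}|=\infty)$ then follows because with positive probability the cell containing $x_1$ joins the infinite open cluster, or, alternatively, from the ergodicity and $0$--$1$ law already invoked in Section~\ref{sec:sys}.
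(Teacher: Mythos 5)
Your proposal takes a genuinely different route from the paper, but as written it is a plan rather than a proof, and the missing piece is exactly where the content of the theorem lies. The paper does \emph{not} couple to a lattice at all; indeed it adopts its approach precisely to avoid the lattice construction of \cite{Pinto2010}. Its proof is a short contradiction argument: call $\bB(\mathbf{0},n)$ \emph{open} if every legitimate node inside it can connect to some legitimate node outside it; if percolation fails, then almost surely some ball is closed, so $\exists\,\epsilon\in(0,1),\,N_1\in\bbN$ with $P(\bB(\mathbf{0},N_1)\text{ is closed})\ge\epsilon$; a closed ball is witnessed by a single node that fails to connect to every node outside, and this failure probability is largest for a node at the origin, where it equals $P(D_e<D_l(N_1))=1-e^{-\lambda_E\pi N_1^2}\tfrac{\lambda}{\lambda+\lambda_E}$, computed \emph{exactly} in Proposition~\ref{prop:upboundFE}. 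Forcing this to exceed $\epsilon$ is what pins $\lambda$ below the stated threshold; the algebraic form $\lambda_E/\bigl(1-(1-\epsilon)e^{-\lambda_E\pi N_1^2}\bigr)$ is nothing but the inversion of that closed-form probability, and has no connection to any lattice criticality $p_c$.

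This is the concrete gap in your argument. The open-edge event you first propose (an eavesdropper-free disk at scale $N_1$) is one you yourself retract a paragraph later, correctly, since its probability $e^{-\lambda_E\pi N_1^2}$ is a constant independent of $\lambda$ and bounded away from $1$, so no increase in legitimate density can push the $k$-dependent field above a Liggett--Schonmann--Stacey domination threshold. Your replacement --- an unspecified ``crossing event'' of a cell pair by short secure hops --- is never defined, and its probability is never estimated; yet the theorem \emph{is} the claim that this probability becomes supercritical exactly when $\lambda>\lambda_E/\bigl(1-(1-\epsilon)e^{-\lambda_E\pi N_1^2}\bigr)$. Calling that step ``bookkeeping'' assumes what is to be proved: there is no reason a renormalization argument should output this closed form at all. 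Lattice arguments of this type (e.g.\ \cite{Pinto2010}) yield thresholds $\lambda_c\le C\lambda_E$ with $C$ determined implicitly by crossing estimates and by a cell size that must be optimized jointly in $\lambda$ and $\lambda_E$, not tied to $N_1$ as in your construction. The solid parts of your proposal (localization of edge events in the spirit of Lemma~\ref{lemma:independence}, stochastic domination for finite-range dependent fields, the care about directedness) are scaffolding around an absent core. To actually prove the statement, the paper's key realization is the one to adopt: ``no percolation'' forces some ball to be closed with probability at least $\epsilon$, and closedness is dominated by a single-node event whose probability admits an exact computation.
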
 
\begin{proof} We prove the Theorem by contradiction.
Define a ball ${\bf B}({\bf 0},n), \ n \in \bbN$  to be {\it open} if all nodes $x\in \Phi \cap {\bf B}({\bf 0},n)$ can connect to at least one node in $x\in \Phi \cap {\bf B}({\bf 0},n)^c$, and {\it closed} otherwise. Let there be no percolation,  i.e. $ P(|{\cal C}_{x_1}| = \infty)=0$ for any $x_1\in \Phi$, 
then $\exists \ N_0 \in \bbN$ such  that ${\bf B}({\bf 0},N_0)$ is closed, since otherwise there will be percolation. Therefore, if $ P(|{\cal C}_{x_1}| = \infty)=0$, then $P(\cup_{n\in\bbN}{\bf B}({\bf 0},n) \ \text {is closed} ) = 1$, and $\sum_{n=1}^{\infty} P({\bf B}({\bf 0},n) \ \text {is closed} ) \ge 1$. Therefore, $\exists \ \epsilon \in (0,1), N_1\in \bbN$ such that $P({\bf B}({\bf 0},N_1)  \text{ is closed}) \ge \epsilon$. 
Note that the event that ${\bf B}({\bf 0},N_1)$ is closed implies that there is at least one node of $x\in \Phi \cap {\bf B}({\bf 0},N_1)$ that cannot connect to any node of $x\in \Phi \cap {\bf B}({\bf 0},N_1)^c$. Therefore,  
\begin{eqnarray*}
P({\bf B}({\bf 0},N_1) \  \text {is closed} ) &\le& 
P(x  \in \Phi \cap {\bf B}({\bf 0},N_1) \  \text {is not connected to any node in} \ \Phi \cap  {\bf B}({\bf 0},N_1)^c ),\\
&=&   P( \min_{e\in \Phi_E}d_{xe} < \min_{y \in \Phi \cap {\bf B}({\bf 0},N_1)^c}d_{xy} ), \ \ x  \in \Phi \cap {\bf B}({\bf 0},N_1).
\end{eqnarray*}
Moreover, note that  it is easiest  for a node $x\in \Phi \cap {\bf B}({\bf 0},N_1)$ to be not able to connect to a node $y \in \Phi \cap {\bf B}({\bf 0},N_1)^c$, if $x$ is at the origin. Therefore, we have that for $ x  \in \Phi \cap {\bf B}({\bf 0},N_1)$
\[P( \min_{e\in \Phi_E}d_{xe} < \min_{y \in \Phi \cap {\bf B}({\bf 0},N_1)^c}d_{xy} ) \le 
P(\min_{e\in \Phi_E}d_{{\bf 0} e} < \min_{y \in \Phi \cap {\bf B}({\bf 0},N_1)^c}d_{{\bf 0} y} ),\] where $D_e \bydef \min_{e\in \Phi_E}d_{{\bf 0} e}$ is the distance of the nearest eavesdropper from the origin, and $D_l(N_1) \bydef \min_{y \in \Phi \cap {\bf B}({\bf 0},N_1)^c}d_{{\bf 0} y}$ is the distance of the nearest legitimate node belonging to $\Phi \cap  {\bf B}({\bf 0},N_1)^c$ from the origin. 
From Proposition \ref{prop:upboundFE}, we have that  $P(D_e< D_l(N_1)) = 1-e^{-\lambda_E\pi N_1^2}\frac{\lambda}{\lambda+\lambda_E}$. Therefore, if $ P(|{\cal C}_{x_1}| = \infty)=0$, then $1-e^{-\lambda_E\pi N_1^2}\frac{\lambda}{\lambda+\lambda_E} > \epsilon$, and 
$\lambda < \frac{\lambda_E}{1-(1-\epsilon)e^{-\lambda_E\pi N_1^2}}$.
\end{proof}

\begin{prop}\label{prop:upboundFE}   $P(D_e< D_l(N_1)) = 1-e^{-\lambda_E\pi N_1^2}\frac{\lambda}{\lambda+\lambda_E}$. \end{prop}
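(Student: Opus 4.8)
The plan is to compute the two one-dimensional distributions of $D_e$ and $D_l(N_1)$ separately, exploit the independence of $\Phi$ and $\Phi_E$, and then evaluate a single radial integral. First I would determine the law of $D_e$, the distance from the origin to the nearest eavesdropper. Since $\Phi_E$ is a PPP of intensity $\lambda_E$, the void probability gives $P(D_e > r) = P(\Phi_E \cap \bB(\mathbf{0},r) = \emptyset) = e^{-\lambda_E \pi r^2}$ for $r \ge 0$, so that $P(D_e < s) = 1 - e^{-\lambda_E\pi s^2}$.

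Next I would determine the law of $D_l(N_1)$, the distance from the origin to the nearest legitimate node lying outside $\bB(\mathbf{0},N_1)$. By construction $D_l(N_1) \ge N_1$ almost surely, and for $r \ge N_1$ the event $\{D_l(N_1) > r\}$ is exactly the event that $\Phi$ places no point in the annulus $\bB(\mathbf{0},r)\setminus \bB(\mathbf{0},N_1)$, whose area is $\pi(r^2 - N_1^2)$. The void probability of the PPP $\Phi$ (intensity $\lambda$) then yields $P(D_l(N_1) > r) = e^{-\lambda\pi(r^2 - N_1^2)}$ for $r \ge N_1$, and differentiating gives the density $f_{D_l}(s) = 2\pi\lambda s\, e^{-\lambda\pi(s^2 - N_1^2)}$ supported on $[N_1,\infty)$.

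Then, since $\Phi$ and $\Phi_E$ are independent, $D_e$ and $D_l(N_1)$ are independent, so I would condition on $D_l(N_1) = s$ and integrate:
\[ P(D_e < D_l(N_1)) = \int_{N_1}^\infty P(D_e < s)\, f_{D_l}(s)\, ds = \int_{N_1}^\infty \bigl(1 - e^{-\lambda_E\pi s^2}\bigr)\, 2\pi\lambda s\, e^{-\lambda\pi(s^2-N_1^2)}\, ds. \]
The substitution $u = s^2$ converts this into a difference of two elementary exponential integrals over $[N_1^2,\infty)$: the term arising from the constant $1$ integrates to $1$, while the term carrying the extra factor $e^{-\lambda_E\pi s^2}$ integrates to $\frac{\lambda}{\lambda+\lambda_E}e^{-\lambda_E\pi N_1^2}$, giving the claimed value $1 - e^{-\lambda_E\pi N_1^2}\frac{\lambda}{\lambda+\lambda_E}$.

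There is no serious obstacle here; the two points that require care are recognizing that $D_l(N_1)$ is supported on $[N_1,\infty)$ (so the relevant empty region is the annulus rather than a full disk, which is precisely what produces the factor $e^{-\lambda_E\pi N_1^2}$), and invoking the independence of the two point processes to justify the product-form integration. Everything else is a routine integration in the radial variable.
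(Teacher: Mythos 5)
Your proof is correct and follows essentially the same route as the paper: both arguments compute the law of $D_l(N_1)$ from the void probability of the annulus (the paper writes $D_l(N_1) = N_1 + X$ and works with the overshoot $X$, while you parameterize directly in the radial variable $s$), then use independence of $\Phi$ and $\Phi_E$ to integrate the nearest-eavesdropper CDF $1 - e^{-\lambda_E\pi s^2}$ against that density, arriving at the same two elementary exponential integrals. The only difference is the cosmetic change of variables, so the two proofs are interchangeable.
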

 \begin{proof}
 Let $D_l(N_1) = N_1 + X$, where $X$ is the random variable representing the shortest distance between the node $y^\star$
 , $y^{\star} =   \arg  \min_{y \in \Phi \cap {\bf B}({\bf 0},N_1)^c}d_{{\bf 0} y}$, and the disc of radius $N_1$. 
Note that $P(X>x)$ is equivalent to $|(\Phi\cap ({\bf B}(0,N_1+x) \backslash {\bf B}({\bf 0},N_1))| =0$. Thus,  $P(X>x) = e^{-\pi \lambda ((x+N_1)^2-N_1^2)}$. Differentiating, we get the PDF $f_X(x) = \lambda \pi 2(x+N_1)e^{-\pi \lambda (x^2+2xN_1)}$. Thus,
 \begin{eqnarray*}
P(D_e< D_l(N_1)) &=&  \bbE_{D_l(N_1)}\{1-e^{-\lambda_E\pi y^2}\}, \\
&=&  \bbE_{X}\{1-e^{-\lambda_E\pi (x+N_1)^2}\}, \\
&=&  1-2 \lambda \pi \int_{0}^{\infty} e^{-\lambda_E\pi (x^2+N_1^2+2xN_1)} (x+N_1)e^{-\pi \lambda (x^2+2xN_1)} dx,  \\
&=&  1-2 \lambda \pi e^{-\lambda_E\pi N_1^2}\int_{0}^{\infty} e^{-\lambda_E\pi (x^2+2xN_1)} (x+N_1)e^{-\pi \lambda (x^2+2xN_1)} dx,  \\
&=& 1-e^{-\lambda_E\pi N_1^2}\frac{\lambda}{\lambda+\lambda_E}.
\end{eqnarray*}
\end{proof}

{\bf Discussion:} In this section we derived an upper bound on the critical intensity $\lambda_c$ required for percolation with the path-loss model. To obtain an upper bound that is valid for the path-loss as well as the path-loss plus fading model, we take a different approach 
compared to \cite{Pinto2010}. We define a ball with radius ($n\in \bbN$) centered at the origin to be open if all the legitimate nodes lying inside the ball are able to connect to at least one node lying outside the ball. Therefore, if there is no percolation, then at least one of the balls is closed, and  there exists an $\epsilon \in (0,1)$ and $N_1$ for which the probability of the ball with radius $N_1$ is closed is greater than $\epsilon$. Since the probability of the ball with radius $N_1$ to be closed is upper bounded by the probability that a node lying at origin is unable to connect to any node outside of a ball of radius $N_1$, the required upper bound is  derived by finding the probability that a node lying at origin is unable to connect to any node outside of a ball of radius $N_1$.

\section{Path-loss plus Fading Model}\label{sec:fading}
In this section we consider signal propagation in the presence of fading in addition to the path-loss. While considering fading together with 
path-loss with $ \gamma =0$,  $PSG = \{\Phi, {\cal E}\}$, with vertex set $\Phi$, and edge 
set ${\cal E} = \{(x_i, x_j): R_{ij}\ge 0 \}$, where
 \[R_{ij} \bydef \left[\log_2\left(1+ P d_{ij}^{-\alpha}|h_{ij}|^2\right) - \log_2\left(1+ P \max_{e\in \Phi_E}d_{ie}^{-\alpha}|h_{ie}|^2\right)\right]^+.\]
 Therefore there exists an edge between $x_i$ and $x_j$ if $ d_{ij}^{-\alpha}|h_{ij}|^2 >  \max_{e\in \Phi_E}d_{ie}^{-\alpha}|h_{ie}|^2$.
Next, we discuss the sub-critical regime, 
  and then follow it up with the super-critical regime.

\subsection{Sub-critical Regime}\label{sec:fadingsubc}
We assume that all the the channel coefficient magnitudes $|h_{ij}|^2, |h_{ie}|^2, x_i,x_j\in \Phi, e\in \Phi_E$ are bounded above, i.e. $\exists \ \kappa \in \bbN$ such that $|h_{ij}|^2 \le \kappa, |h_{ie}|^2\le \kappa, x_i,x_j\in \Phi, e\in \Phi_E$. Essentially, what we need is that the channel coefficient magnitudes should not have infinite support. Most often in literature, channel coefficient magnitudes are assumed to be exponentially distributed (derived from Rayleigh fading distribution), however, in practice, it is not difficult to safely assume that the channel coefficient magnitudes are upper bounded by some large constant. 

With the bounded channel magnitude assumption, we will essentially reuse the proof we developed in Section \ref{sec:plsubc} for the sub-critical regime for the path-loss model as follows. Let $x_1\in \Phi$, and $\max_{e\in \Phi_E}d_{1e}^{-\alpha}|h_{1e}|^2  > \beta$, i.e. the maximum of the received 
power at 
any eavesdropper from $x_1$ is greater than $\beta$. Then a necessary condition for $x_1$ to connect to $x_j$ is that $d_{1j}^{-\alpha}|h_{1j}|^2 > \beta$. Then using our assumption that $|h_{ij}|^2 \le \kappa$, we know that 
$d_{1j}^{-\alpha}|h_{1j}|^2 \le d_{1j}^{-\alpha}\kappa$, and hence $d_{1j}^{-\alpha}\kappa > \beta$. Thus, $x_1$ can possibly communicate secretly with only those  $x_k's \in \Phi$ that are at a distance 
less than $\eta \bydef \left(\frac{\kappa}{\beta}\right)^{\frac{1}{\alpha}}$ from it. To draw a parallel with the setup of Section  \ref{sec:plsubc} for the sub-critical regime, this is equivalent to assuming that the radius of connectivity of $x_1$ is less than $r$, $\rho(x_1) < r$, and $\eta$ is going to play the role of $r$.

Let $G_{D_{10\eta}}(\beta) \bydef  
\{ \max_{e\in \Phi_E}d_{je}^{-\alpha}|h_{je}|^2  > \beta, \forall \ x_j \in \Phi\cap D_{10\eta}\}$, i.e. $G_{D_{10\eta}}(\beta)$ is the event 
that the maximum received power at any eavesdropper from all nodes of $\Phi \cap D_{10\eta}$ is greater than $\beta$. Therefore, in turn 
this guarantees that any node of $\Phi$ that lies inside  $ D_{10\eta}$ can only connect to nodes of $\Phi$ which are a distance of $\eta$ or less. Event $G_{D_{10\delta}}(\beta)$ is equivalent to event $A_{D_{10r}}(r)$ of Section  \ref{sec:plsubc}. Moreover, let $Q(\eta)$ be the event that there is a path from node $x\in\Phi\cap D_{\eta}$ to a node $y\in\Phi\cap D_{9\eta} \backslash D_{8\eta}$ with all the nodes of the path  between $x$ and $y$ 
lying inside $D_{10\eta}$, and the distance between any two nodes on the path  between $x$ and $y$ is less than $\eta$.  Event $Q(\eta)$ is equivalent to event $B(q,r)$ from Section  \ref{sec:plsubc} with $q={\bf 0}$. Let $x_1\in \Phi\cap D_{\eta}$, and  the connected 
component of $x_1$ be ${\cal C}_{x_1}$. Let the farthest node of ${\cal C}_{x_1}$ be $x_L$, and $x_L \in D_{10\eta}^c$. From Proposition \ref{prop:transform}, it follows that 
\[P(x_L \in D_{10\eta}^c) \le P(Q(\eta)) + P(G_{D_{10\eta}}(\beta)^c).\]
Note that $\beta \rightarrow 0$ is same as $\eta \rightarrow \infty$ which is equivalent to $r \rightarrow \infty$ from Section  \ref{sec:plsubc}.
Similar to Proposition \ref{prop:a}, we can show that $P(G_{D_{10\eta}}(\beta)^c) \rightarrow 0$, as $\beta \rightarrow 0$. 
Moreover, notice that now the problem is identical to the problem while considering only path-loss (Section  \ref{sec:plsubc}), with 
$\eta$ playing the role of $r$. Thus, following the proof of Theorem \ref{thm:mainsubc}, we obtain  the following Theorem.
\begin{thm}\label{thm:mainfadingsubc} For the path-loss plus fading model, if $\lambda \le \frac{\lambda_E}{4C^2}$, where $C>0$ is a constant that does not depend on $\lambda$ or $\lambda_E$, then $P(|{\cal C}_{x_1}| = \infty) =0$ if the 
channel coefficient magnitudes have finite support.
\end{thm}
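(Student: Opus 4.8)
The plan is to reduce the path-loss plus fading problem to the pure path-loss problem of Section~\ref{sec:plsubc} using only the finite-support hypothesis $|h|^2\le\kappa$, and then to re-run the argument of Theorem~\ref{thm:mainsubc} verbatim with $\eta$ in the role of $r$, the crossing event $Q(\eta)$ in the role of $B({\bf 0},r)$, and $G_{D_{10\eta}}(\beta)$ in the role of $A_{D_{10r}}(r)$. Concretely, for a node $x$ let $P_x^{\max}\bydef\max_{e\in\Phi_E}d_{xe}^{-\alpha}|h_{xe}|^2$ denote its largest received power at the eavesdroppers and set $\tilde\rho(x)\bydef(\kappa/P_x^{\max})^{1/\alpha}$. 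Since $|h_{xj}|^2\le\kappa$, any edge $x\to x_j$ forces $d_{xj}^{-\alpha}\kappa>P_x^{\max}$, i.e. $d_{xj}<\tilde\rho(x)$, so $\tilde\rho(x)$ upper-bounds the connectivity radius of $x$ in every realization and plays exactly the role of $\rho(x)$ from the path-loss model. With $\eta=(\kappa/\beta)^{1/\alpha}$ the event $G_{D_{10\eta}}(\beta)=\{P_x^{\max}>\beta\ \forall\,x\in\Phi\cap D_{10\eta}\}$ coincides with $\{\tilde\rho(x)<\eta\ \forall\,x\in\Phi\cap D_{10\eta}\}$, mirroring $A_{D_{10r}}(r)$. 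The purely geometric derivation preceding Proposition~\ref{prop:transform} is then unchanged, giving $P(x_L\in D_{10\eta}^c)\le P(Q(\eta))+P(G_{D_{10\eta}}(\beta)^c)$, and since a PPP has only finitely many points in any bounded set, $P(|{\cal C}_{x_1}|=\infty)\le\lim_{\eta\to\infty}P(x_L\in D_{10\eta}^c)$.

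Next I would re-establish the three ingredients feeding Proposition~\ref{prop:goure}. The base bound $P(Q(\eta))\le\lambda C_1\eta^2$ (analogue of Proposition~\ref{prop:b}) is a first-moment count of short-edge paths and transfers verbatim, as it uses only that every edge has length below $\eta$. The tail bound $P(G_{D_{10\eta}}(\beta)^c)\to0$ (analogue of Proposition~\ref{prop:a}) follows from a Campbell/union bound over $\Phi\cap D_{10\eta}$, controlled by $\lambda\,\nu(D_{10\eta})\,P(\tilde\rho({\bf 0})>\eta)$ up to a constant; here finite support is used to ensure $\bbE\{\tilde\rho^2\}<\infty$ and that $P(\tilde\rho>\eta)$ decays fast enough (in fact like $e^{-c\eta^2}$, since the expected number of eavesdroppers with received power exceeding $\beta$ is at most $\lambda_E\pi\eta^2$) that this product vanishes. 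The recursion $P(Q(10\eta))\le C_3 P(Q(\eta))^2+P(G_{D_{100\eta}}(\beta)^c)$ (analogue of Lemma~\ref{lemma:main}) is then obtained as before by a renormalization in which a crossing at scale $10\eta$ forces two crossings at scale $\eta$ in disjointly located sub-boxes, which are independent by the locality lemma.

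The step I expect to be the main obstacle is the locality statement of Lemma~\ref{lemma:independence}: that $Q(\eta)$ depends only on $\Phi\cap D_{9\eta}$ and $\Phi_E\cap D_{10\eta}$. In the pure path-loss model this is immediate, because the connectivity radius is $\min_{e}d_{xe}$, governed by the single nearest eavesdropper. In the fading model the edge condition involves the maximum received power over \emph{all} eavesdroppers, so a priori a distant eavesdropper could determine it. This is exactly where finite support is essential: any eavesdropper at distance $\ge\eta$ contributes received power $d_{xe}^{-\alpha}|h_{xe}|^2\le\eta^{-\alpha}\kappa=\beta$, so \emph{conditioned on} $G_{D_{10\eta}}(\beta)$, where $P_x^{\max}>\beta$, the maximum must be attained by an eavesdropper within distance $\eta$ of $x$. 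Hence for $x\in\Phi\cap D_{9\eta}$ only eavesdroppers in $D_{10\eta}$ affect its edges, and the locality---and therewith the independence required for the recursion---is recovered. Without finite support no such $\eta$ exists and the reduction breaks, which is why the hypothesis is needed.

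Finally, a scaling argument pins down the advertised threshold: densifying $\Phi_E$ by a factor $t$ rescales $P_x^{\max}$ by $t^{\alpha/2}$ in distribution and hence $\tilde\rho$ by $t^{-1/2}$, so $\bbE\{\tilde\rho^2\}=c_0/\lambda_E$ with $c_0$ depending only on $\kappa$, $\alpha$ and the fading law, not on $\lambda_E$. Setting $f(\cdot)=C\,P(Q(\cdot))$ and $g(\cdot)$ equal to the tail term, I would verify $f\le\frac{1}{2}$ on $[1,10]$ and $g\le\frac{1}{4}$ on $[1,\infty]$ exactly as in Propositions~\ref{prop:sizef} and \ref{prop:sizeg} under $\lambda\le\frac{1}{4C^2\bbE\{\tilde\rho^2\}}=\frac{\lambda_E}{4C^2}$ (after absorbing $c_0$ into $C$), apply Proposition~\ref{prop:goure} to conclude $P(Q(\eta))\to0$, and thereby obtain $P(|{\cal C}_{x_1}|=\infty)=0$.
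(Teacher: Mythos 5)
Your proposal is correct and follows essentially the same route as the paper: the paper likewise uses the finite-support bound $|h|^2\le\kappa$ to define $\eta=(\kappa/\beta)^{1/\alpha}$, substitutes $Q(\eta)$ for $B({\bf 0},r)$ and $G_{D_{10\eta}}(\beta)$ for $A_{D_{10r}}(r)$, invokes Proposition~\ref{prop:transform}, and then reruns the Section~\ref{sec:plsubc} machinery with $\eta$ playing the role of $r$. The only difference is one of completeness, in your favor: the paper simply declares the problem ``identical'' to the path-loss case, whereas you explicitly verify the two facts that make that reduction legitimate---the locality of edge events (restored by finite support, so that the analogue of Lemma~\ref{lemma:independence}, and hence the independence used in Lemma~\ref{lemma:main}, survives the move to fading) and the scaling $\bbE\{\tilde\rho^{\,2}\}=c_0/\lambda_E$ that produces the advertised threshold $\lambda_E/(4C^2)$.
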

 {\bf Discussion:} 
 In this section we obtained an universal lower bound on the critical intensity $\lambda_c$ required for percolation for the path-loss plus fading model. We reused the proof derived in Section \ref{sec:plsubc} for the path-loss model, by assuming that all the fading channel coefficients have a bounded support. The bounded support assumption on channel magnitudes allows us to  conclude that if the maximum signal power received at any eavesdropper is above a threshold $\beta$, then each legitimate node 
 can connect to any other legitimate node only if it is at bounded distance (a function of $\beta$) from it. Therefore, with this assertion, we show that if $\lambda$ is below a threshold (the derived lower bound), the probability that there is a path between two legitimate nodes at a distance $r$ from each other  goes to zero as $r \rightarrow \infty$. Therefore with probability one,  if $\lambda$ is below a threshold, the connected component of any node lies inside a bounded region, and since infinitely many nodes of a PPP do not lie in a bounded region, the connected component of any node is finite.

\subsection{Super-critical Regime}\label{sec:fadingsupc}
In this section we obtain an upper bound on $\lambda_c$ for the path-loss plus fading model. We assume that the fading channel 
coefficients $h_{ij}, h_{ie}, \ \forall \ i,j \in \Phi, e \in \Phi_E$ are distributed as ${\cal CN}(0,1)$, to model a rich scattering wireless environment. 
Note that the results derived in this section can 
be generalized for any distribution of  the fading channel coefficients.
Similar to the previous subsection, in this subsection also, we will reuse the proof we developed in Section \ref{sec:plsupc} for the super-critical regime for the path-loss model as follows. Previously, in \cite{Pinto2010}, an upper bound on the critical intensity for the path-loss model 
is obtained by mapping the continuum percolation model to a  discrete percolation model depending on distance between the nodes. 
The strategy used in \cite{Pinto2010}, however, cannot be extended to the path-loss plus fading model since in this case 
$x_i\rightarrow x_j, x_i,x_j \in \Phi$ even if 
$d_{i,e} < d_{ij}$ since it is possible to have $|h_{ie}|^2d^\alpha_{i,e} < |h_{ij}|^2d^{-\alpha}_{ij}$ when $d_{i,e} < d_{ij}$.

\begin{thm}\label{thm:plsupc} For the path-loss plus fading model,  $\exists \ \epsilon \in (0,1),  N_1\in \bbN$ for which  
$P(|{\cal C}_{x_1}| = \infty) >0$  if  
 $\lambda > \frac{\lambda_E \nu (1-\epsilon)}{\epsilon \nu_1}$, where 
 $\nu_1 = \pi \int_{N_1}^{\infty} x^{2/\alpha}e^{-x}dx$, and $\nu = \pi \int_{0}^{\infty} x^{2/\alpha}e^{-x}dx$. 
\end{thm}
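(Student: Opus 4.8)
The plan is to adapt, essentially verbatim, the contradiction-based argument used for the super-critical \emph{path-loss} regime, replacing the distance comparison by a received-power comparison. First I would reuse the open/closed ball machinery: call $\bB(\mathbf{0},n)$ \emph{open} if every $x\in\Phi\cap\bB(\mathbf{0},n)$ can connect (in the fading PSG) to at least one node of $\Phi\cap\bB(\mathbf{0},n)^c$, and \emph{closed} otherwise. Assuming $P(|\cC_{x_1}|=\infty)=0$, some ball must be closed, since if every $\bB(\mathbf{0},n)$ were open one could escape every ball and produce an unbounded component. Hence $P\big(\bigcup_{n}\{\bB(\mathbf{0},n)\text{ closed}\}\big)=1$, and therefore there exist $\epsilon\in(0,1)$ and $N_1\in\bbN$ with $P(\bB(\mathbf{0},N_1)\text{ closed})\ge\epsilon$. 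Since a closed ball forces at least one interior node to fail to connect outward, and since a node at the centre is the hardest to connect outward (moving it toward the boundary only shortens its distance to parts of $\bB(\mathbf{0},N_1)^c$ and hence stochastically increases the received powers there, while leaving the eavesdropper statistics unchanged by stationarity of $\Phi_E$), I would bound
\[
P(\bB(\mathbf{0},N_1)\text{ closed}) \le P\big(\text{origin cannot connect to any } y\in\Phi\cap\bB(\mathbf{0},N_1)^c\big).
\]

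The heart of the proof is then to upper bound this last probability. In the fading model the origin connects to an outside node $y$ iff $d_{\mathbf{0}y}^{-\alpha}|h_{\mathbf{0}y}|^2 > M_E$, where $M_E \bydef \max_{e\in\Phi_E}d_{\mathbf{0}e}^{-\alpha}|h_{\mathbf{0}e}|^2$ is the strongest eavesdropper received power, so the failure event is exactly $\{\max_{y\in\Phi\cap\bB(\mathbf{0},N_1)^c} d_{\mathbf{0}y}^{-\alpha}|h_{\mathbf{0}y}|^2 \le M_E\}$: the largest received power at the origin, among the eavesdroppers (over all of $\bbR^2$) together with the legitimate nodes outside $\bB(\mathbf{0},N_1)$, is contributed by an eavesdropper. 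I would analyse this by the mapping (displacement) theorem for PPPs. With $|h|^2\sim\mathrm{Exp}(1)$, the eavesdropper received powers $\{d_{\mathbf{0}e}^{-\alpha}|h_{\mathbf{0}e}|^2\}$ form a PPP on $(0,\infty)$ whose mean number of points above level $t$ equals $\lambda_E\nu\, t^{-2/\alpha}$, with the constant $\nu=\pi\int_0^\infty x^{2/\alpha}e^{-x}\,dx$ arising from $\frac{2\pi}{\alpha}\Gamma(2/\alpha)=\pi\Gamma(1+2/\alpha)$; likewise the legitimate outside powers form an \emph{independent} PPP whose excess-intensity, after restricting the spatial integral to $\{|x|\ge N_1\}$, produces the constant $\nu_1=\pi\int_{N_1}^\infty x^{2/\alpha}e^{-x}\,dx$.

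Once both point processes are identified, the failure probability is the probability that the maximum of their superposition belongs to the eavesdropper process. Because both excess-intensities are proportional to $t^{-2/\alpha}$ with the \emph{same} exponent, this competition is scale-invariant and collapses to the ratio of the two leading constants, giving $P(\text{fail}) \le \frac{\lambda_E\nu}{\lambda\nu_1+\lambda_E\nu}$. Combining with $P(\bB(\mathbf{0},N_1)\text{ closed})\ge\epsilon$ yields $\epsilon \le \frac{\lambda_E\nu}{\lambda\nu_1+\lambda_E\nu}$, which is equivalent to $\lambda\le\frac{\lambda_E\nu(1-\epsilon)}{\epsilon\nu_1}$; contrapositively, $\lambda>\frac{\lambda_E\nu(1-\epsilon)}{\epsilon\nu_1}$ forces $P(|\cC_{x_1}|=\infty)>0$, which is the claim.

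I expect the main obstacle to be the power-domain competition step, and specifically pinning down the legitimate constant $\nu_1$. The restriction to $\bB(\mathbf{0},N_1)^c$ makes the legitimate excess-intensity an incomplete-gamma integral that is \emph{not} an exact power law in $t$, so the clean ratio $\frac{\lambda_E\nu}{\lambda\nu_1+\lambda_E\nu}$ really requires either a careful lower bound of that intensity by $\lambda\nu_1 t^{-2/\alpha}$ over the range of $t$ dominating the competition integral, or a direct evaluation; making the constant emerge exactly as $\pi\int_{N_1}^\infty x^{2/\alpha}e^{-x}\,dx$ is the delicate bookkeeping I would have to carry out. A secondary point worth verifying is the monotonicity claim that the origin is the worst-case interior position under fading, which I would justify through the stochastic ordering of the distances from an interior point to $\bB(\mathbf{0},N_1)^c$.
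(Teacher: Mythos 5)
You follow the paper's route step for step: the same contradiction via open/closed balls, the same extraction of $\epsilon$ and $N_1$, the same worst-case reduction to a node at the origin, and the same competition between $\Delta=\max_{e\in\Phi_E}d_{{\bf 0}e}^{-\alpha}|h_{{\bf 0}e}|^2$ and $\Gamma=\max_{y\in\Phi\cap\bB({\bf 0},N_1)^c}d_{{\bf 0}y}^{-\alpha}|h_{{\bf 0}y}|^2$, which is exactly the content of the paper's Proposition~\ref{prop:powercomp}. The eavesdropper half of your computation is fine: with no spatial restriction the marked process has an exact power-law excess-intensity and $P(\Delta\le g)=e^{-\lambda_E\nu g^{-2/\alpha}}$. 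The genuine gap is the step you yourself flag and defer. The legitimate excess-intensity is
\[
\Lambda_L(g)\;=\;\lambda\int_{N_1}^{\infty}2\pi r\,e^{-g r^{\alpha}}\,dr\;=\;\frac{2\pi\lambda}{\alpha}\,g^{-2/\alpha}\int_{g N_1^{\alpha}}^{\infty}x^{2/\alpha-1}e^{-x}\,dx,
\]
whose lower limit grows with $g$; since $\alpha>2$, for $gN_1^{\alpha}\ge 1$ one gets $\Lambda_L(g)\le\frac{2\pi\lambda}{\alpha}N_1^{2-\alpha}\,g^{-1}e^{-gN_1^{\alpha}}$, i.e.\ exponential decay in $g$. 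The uniform lower bound $\Lambda_L(g)\ge\lambda\nu_1 g^{-2/\alpha}$ that your ``scale-invariant collapse'' requires (to get $P(\Gamma\le\Delta)=\bbE\{e^{-\Lambda_L(\Delta)}\}\le\bbE\{e^{-\lambda\nu_1\Delta^{-2/\alpha}}\}=\frac{\lambda_E\nu}{\lambda_E\nu+\lambda\nu_1}$) therefore fails for all large $g$ --- precisely the regime where an eavesdropper sits close to the origin, which is the dominant way the origin gets isolated; for small $g$ the inequality goes the other way, so neither direction holds uniformly. Consequently the clean ratio does not follow from your argument as written: the collapse to a ratio of constants needs both intensities to be exact power laws with the same exponent, and yours are not.

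You should also know that this is exactly where the paper itself is shaky, so the obstacle you identified is not ``bookkeeping'' but a real defect. Proposition~\ref{prop:powercomp} obtains the pure power law $\lambda_{g_1}=\lambda\nu_1 g_1^{-2/\alpha}$ by evaluating $\lambda\int_{N_1}^{\infty}\int_{0}^{(x/g_1)^{1/\alpha}}2\pi r\,\chi_n(x)\,dr\,dx$, i.e.\ by restricting the fading mark to $x\ge N_1$ while letting the distance run from $0$; but ${\cal P}(g_1)$ is defined by restricting the \emph{distance} to $r\ge N_1$ with unrestricted fading, and the correct domain $\{(r,x):r\ge N_1,\ x\ge g_1 r^{\alpha}\}$ yields the incomplete-gamma expression above, not a power law (note also that the paper's $\nu_1$ compares the distance $N_1$ against a dimensionless fading power, a symptom of the slip). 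So your proposal reproduces the paper's skeleton but stalls at its weakest joint rather than repairing it, and the repair you hoped for is provably unavailable. Any correct completion has to work with the true $\Lambda_L$ --- for instance by truncating the competition to $\{\Delta\le N_1^{-\alpha}\}$, where $\Lambda_L(g)\ge\lambda{\tilde\nu}_1 g^{-2/\alpha}$ with ${\tilde\nu}_1=\frac{2\pi}{\alpha}\int_{1}^{\infty}x^{2/\alpha-1}e^{-x}\,dx$, at the cost of an additive term $P(\Delta>N_1^{-\alpha})=1-e^{-\lambda_E\nu N_1^{2}}$ --- and any such completion produces constants different from the $\nu_1$ claimed in the theorem.
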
 
\begin{proof}
Let there be no percolation, i.e. $ P(|{\cal C}_{x_1}| = \infty)=0$ for any $x_1\in \Phi$. 
Assume that $x_1$ lies at the origin. Similar to Section  \ref{sec:plsupc}, define 
a ball ${\bf B}({\bf 0},n)$  to be {\it open} if all nodes of $x\in \Phi \cap {\bf B}({\bf 0},n)$ can connect to at least one node in $x\in \Phi \cap {\bf B}({\bf 0},n)^c$, and {\it closed} otherwise.
Then with no percolation, $\sum_{n=1}^{\infty} P({\bf B}({\bf 0},n) \ \text {is closed} ) \ge 1$. 
Thus,  $\exists \ \epsilon \in (0,1),  N_1$ such that $P({\bf B}({\bf 0},N_1) \ \text {is closed}) \ge \epsilon$.
Note that 
\begin{eqnarray*}
P({\bf B}(0,1) \  \text {is closed} ) & \le & 
P(x  \in \Phi \cap {\bf B}({\bf 0},N_1) \  \text {is not connected to any node in} \ \Phi \cap  {\bf B}({\bf 0},N_1)^c ), \\
&\le& P(\max_{e\in \Phi_E}|d_{x e}|^{-\alpha}|h_{xe}|^2 > \max_{y \in \Phi \cap {\bf B}({\bf 0},N_1)^c}d_{xy}^{-\alpha}|h_{x y}|^2 ) , \ \ x  \in \Phi \cap {\bf B}({\bf 0},N_1).
\end{eqnarray*}
Moreover, note that  it is most difficult for a node $x\in \Phi \cap {\bf B}({\bf 0},N_1)$ to connect to a node $y \in \Phi \cap {\bf B}({\bf 0},N_1)^c$ if $x$ is at the origin. 
Therefore, we have that for $x\in \Phi \cap {\bf B}({\bf 0},N_1)$
\[P(\max_{e\in \Phi_E}|d_{x e}|^{-\alpha}|h_{xe}|^2 > \max_{y \in \Phi \cap {\bf B}({\bf 0},N_1)^c}d_{xy}^{-\alpha}|h_{x y}|^2 ) \le 
P(\max_{e\in \Phi_E}|d_{{\bf 0} e}|^{-\alpha}|h_{{\bf 0} e}|^2 > \max_{y \in \Phi \cap {\bf B}({\bf 0},N_1)^c}d_{{\bf 0} y}^{-\alpha}|h_{{\bf 0} y}|^2 ),\] where $\Delta \bydef \max_{e\in \Phi_E}|d_{{\bf 0} e}|^{-\alpha}|h_{{\bf 0} e}|^2$ is the maximum of the power received by any eavesdropper from  the origin, and  $\Gamma  \bydef \max_{y \in \Phi \cap {\bf B}({\bf 0},N_1)^c}d_{{\bf 0} y}^{-\alpha}|h_{{\bf 0} y}|^2 $ is the maximum power received by any legitimate node belonging to $\Phi \cap  {\bf B}(0,1)^c$ from the origin. Note that $\Delta$, and $\Gamma$ correspond to $D_e$, and $D_{l}(N_1)$, respectively, from the proof of Theorem \ref{thm:plsupc}. 
 From Proposition \ref{prop:powercomp}, we obtain that $P(\Delta > \Gamma) = \frac{\lambda_E\nu}{\lambda_E\nu + \lambda\nu_1}$, where 
 $\nu_1 = \pi \int_{N_1}^{\infty} x^{2/\alpha}e^{-x}dx$, and $\nu = \pi \int_{0}^{\infty} x^{2/\alpha}e^{-x}dx$. 
 Recall that $P({\bf B}({\bf 0},N_1) \ \text {is closed} )> \epsilon $ which implies that $P(\Delta > \Gamma)> \epsilon $, and hence 
$\frac{\lambda_E\nu}{\lambda_E\nu + \lambda\nu_1}  > \epsilon$. Thus, if $P(|{\cal C}_{x_1}| = \infty) =0$, then $\lambda < \frac{\lambda_E \nu (1-\epsilon)}{\epsilon \nu_1}$, and therefore completes the proof.
\end{proof}

\begin{prop}$\label{prop:powercomp} P(\Delta > \Gamma) = \frac{\lambda_E\nu}{\lambda_E\nu + \lambda\nu_1}$, where $\nu_1 = \pi \int_{1}^{\infty} x^{2/\alpha}e^{-x}dx$, and $\nu = \pi \int_{0}^{\infty} x^{2/\alpha}e^{-x}dx$.
\end{prop}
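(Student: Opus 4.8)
The plan is to read $\Delta$ and $\Gamma$ as the two competing maxima of received powers and to follow the conditioning argument already used for the path-loss analogue in Proposition~\ref{prop:upboundFE}, to which the statement explicitly reduces via $\Delta\leftrightarrow D_e$ and $\Gamma\leftrightarrow D_l(N_1)$. Since every $h\sim\mathcal{CN}(0,1)$, each $|h|^2$ is an independent unit-mean $\mathrm{Exp}(1)$ mark. Then $\{d_{{\bf 0}e}^{-\alpha}|h_{{\bf 0}e}|^2:e\in\Phi_E\}$ is the image of the marked PPP $\Phi_E$ (intensity $\lambda_E$ on $\bbR^2$) under $(r,g)\mapsto r^{-\alpha}g$, and $\{d_{{\bf 0}y}^{-\alpha}|h_{{\bf 0}y}|^2:y\in\Phi\cap\bB({\bf 0},N_1)^c\}$ is the image of $\Phi$ restricted to $\{|y|>N_1\}$ (intensity $\lambda$); these two images are independent. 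Thus $\Delta$ is the supremum of the first image, $\Gamma$ the supremum of the second, and $P(\Delta>\Gamma)$ is the probability that the largest received power over the whole picture is generated by an eavesdropper.

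First I would compute, by Campbell's theorem, the tail intensity $G_E(t)\bydef\bbE[\#\{\text{eavesdropper powers}>t\}]=\lambda_E\int_{\bbR^2}P(|x|^{-\alpha}g>t)\,dx=2\pi\lambda_E\int_0^{\infty}e^{-t r^\alpha}r\,dr$, using that the marks are unit-mean exponential so $P(r^{-\alpha}g>t)=e^{-t r^\alpha}$. The substitution $x=t r^\alpha$ turns this into $\lambda_E t^{-2/\alpha}\,\pi\int_0^{\infty}x^{2/\alpha}e^{-x}\,dx=\lambda_E\nu\,t^{-2/\alpha}$, which identifies the constant $\nu=\pi\int_0^{\infty}x^{2/\alpha}e^{-x}\,dx$ in the statement. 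Repeating the computation for the legitimate image, with the radial integral now starting at $N_1$, is intended to produce the restricted constant $\nu_1=\pi\int_{N_1}^{\infty}x^{2/\alpha}e^{-x}\,dx$. The void probability of a PPP then gives $P(\Delta\le t)=\exp(-\lambda_E\nu\,t^{-2/\alpha})$ and an analogous law $\exp(-G_L(t))$ for $\Gamma$.

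To finish, as in Proposition~\ref{prop:upboundFE}, I would condition on $\Gamma$ and evaluate
\[
P(\Delta>\Gamma)=\bbE_{\Gamma}\!\left[1-\exp\!\left(-\lambda_E\nu\,\Gamma^{-2/\alpha}\right)\right]=\int_0^{\infty}e^{-G_E(t)-G_L(t)}\,\bigl(-dG_E(t)\bigr),
\]
the second form obtained by superposing the two independent power processes and applying Mecke's formula to the largest point. Because $G_E(t)=\lambda_E\nu\,t^{-2/\alpha}$, the substitution $s=t^{-2/\alpha}$ linearises its exponent and reduces the integral to $\int_0^{\infty}\lambda_E\nu\,e^{-\lambda_E\nu s}\,e^{-G_L}\,ds$, which I would then drive to the claimed ratio $\lambda_E\nu/(\lambda_E\nu+\lambda\nu_1)$ by showing the legitimate side contributes the term $\lambda\nu_1 s$.

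The hard part will be exactly this last reduction. Carried out literally, Campbell's theorem gives a legitimate tail $G_L(t)=\tfrac{2\pi\lambda}{\alpha}\,t^{-2/\alpha}\int_{t N_1^\alpha}^{\infty}x^{2/\alpha-1}e^{-x}\,dx$, an incomplete-gamma expression whose lower limit $tN_1^\alpha$ depends on $t$; this is not a pure power law, so after the substitution $s=t^{-2/\alpha}$ the exponent $G_L$ is not simply linear in $s$. The main obstacle is therefore to handle the radial cutoff $|y|>N_1$ so that, in the competition between the two maxima, the legitimate process enters only through the constant $\nu_1=\pi\int_{N_1}^{\infty}x^{2/\alpha}e^{-x}\,dx$ with lower limit $N_1$; establishing that reduction (and justifying any attendant approximation) is where essentially all the work of the proof sits, the remaining Poisson bookkeeping being routine and parallel to the path-loss case.
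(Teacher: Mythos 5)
Your proposal follows exactly the route of the paper's own proof: the paper also forms the marked processes $\{(x_n,I_n):x_n\in\Phi\cap\bB({\bf 0},N_1)^c,\ I_n>g_1\}$ and $\{(e_m,I_m^E):e_m\in\Phi_E,\ I_m^E>g_2\}$, computes their mean measures by the Marking/Campbell theorem, reads off the CDFs of the maxima as void probabilities, $P(\Gamma\le g)=e^{-\lambda\nu_1 g^{-2/\alpha}}$ and $P(\Delta\le g)=e^{-\lambda_E\nu g^{-2/\alpha}}$, and then evaluates the same conditioning integral $P(\Delta>\Gamma)=\int_0^\infty e^{-\lambda\nu_1 g^{-2/\alpha}}\,dP(\Delta\le g)$, which collapses to the stated ratio because both exponents are multiples of the \emph{same} power $g^{-2/\alpha}$. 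Your eavesdropper-side computation $G_E(t)=\lambda_E\nu\,t^{-2/\alpha}$ coincides with the paper's. However, as a proof your proposal is incomplete at exactly the point you flag: you never establish that the legitimate-side tail intensity is $\lambda\nu_1 t^{-2/\alpha}$, and without that the final integral does not produce $\lambda_E\nu/(\lambda_E\nu+\lambda\nu_1)$.

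The important news is that the obstacle you identified is genuine, and it is where the paper's proof goes wrong rather than where you missed a trick. The paper gets the pure power law from
$\lambda_{g_1}=\lambda\int_{N_1}^{\infty}\int_{0}^{(x/g_1)^{1/\alpha}}2\pi r\,\chi_n(x)\,dr\,dx$,
i.e., it places the cutoff $N_1$ on the \emph{fading mark} $x=|h|^2$ (integrating the radius from $0$), whereas the process defining $\Gamma$ is restricted \emph{spatially} to $y\in\Phi\cap\bB({\bf 0},N_1)^c$. The correct Campbell integral is the one you wrote, $G_L(t)=2\pi\lambda\int_{N_1}^{\infty}r\,e^{-t r^{\alpha}}\,dr=\frac{2\pi\lambda}{\alpha}\,t^{-2/\alpha}\int_{tN_1^{\alpha}}^{\infty}x^{2/\alpha-1}e^{-x}\,dx$; the two integration domains, $\{x>N_1,\ 0<r<(x/g_1)^{1/\alpha}\}$ versus $\{r>N_1,\ x>g_1 r^{\alpha}\}$, are simply different sets, and they give the same answer only when $N_1=0$, where the identity $\frac{2}{\alpha}\int_0^{\infty}x^{2/\alpha-1}e^{-x}dx=\int_0^{\infty}x^{2/\alpha}e^{-x}dx$ makes both equal $\lambda\nu\,t^{-2/\alpha}$. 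For $N_1>0$ the correct $G_L(t)$ decays like $e^{-tN_1^{\alpha}}$ as $t\to\infty$, so $\Gamma$ is not Fr\'echet-distributed, no substitution linearises both exponents simultaneously, and the competition integral does not reduce to the claimed ratio with $\nu_1=\pi\int_{N_1}^{\infty}x^{2/\alpha}e^{-x}dx$. So the reduction you were hoping to establish is false as stated; what this route proves cleanly is only the $N_1=0$ case, $P(\Delta>\Gamma)=\lambda_E/(\lambda_E+\lambda)$, or inequalities obtained by sandwiching $G_L(t)$ between genuine power laws. Your ``incomplete'' proof is in effect an honest version of the paper's argument, and the gap it exposes is a flaw in the paper's Proposition~\ref{prop:powercomp} itself (and hence in the constant appearing in Theorem~\ref{thm:plsupc} for the fading case).
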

\begin{proof} See Appendix \ref{app:powercomp}.
\end{proof}

\appendices
\section{Proof of Proposition \ref{prop:b}}
\label{app:propb}
Event $B({\bf 0},r)$ implies that  $\#(\Phi \cap D_{10r}) >0$. Hence $P(B({\bf 0},r)) \le P(\#(\Phi \cap D_{10r}) >0)$. Since $\bbE\{\#(\Phi \cap D_{10r})\}= \lambda \nu(D_{10r}) r^2$ is clearly greater than or equal to  $P(\#(\Phi \cap D_{10r}) >0)$, the result follows.

\section{Proof of Proposition \ref{prop:a}}
\label{app:propa}
Note that $1- P(A_{D_{mr}}(r)) = P(\exists \text{ at least one node} \  x \in  \Phi \cap D_{mr}, \  \text {such that} \  \rho(x) > r)$.  Hence 
\begin{eqnarray}\nonumber
1-P(A_{D_{mr}}(r)) & = & \sum_{j=0}^\infty P(\#(\Phi \cap D_{mr})=j) P(  \{\rho(x_1) > r\} \cup \ldots \cup \{\rho(x_j) > r\} ),\\\nn
&\stackrel{(a)}\le & \sum_{j=0}^\infty P(\#(\Phi \cap D_{mr})=j)  jP(  \{\rho(x) > r\}),\\ \nn
 &=  & \sum_{j=0}^\infty \frac{(\lambda \nu(D_{mr}))^j}{j!}e^{-\lambda \nu(D_{mr})}jP(  \{\rho(x) > r\}), \\\nn
 & = & \lambda \nu(D_{m}) r^2P(\{\rho(x) > r\}), \\ \nn
& \le & \lambda \nu(D_{m}) \int_{r}^{\infty} s^2 \phi_{\rho}(s) ds, \\ \nn
&\le & \lambda \nu(D_{m})
 \bbE\{\rho^2 \bI_{\{\rho>r\}}\},
 \end{eqnarray} 
where $(a)$ is obtained by using the union bound and since $\{\rho(x_j) \le r\}$ is identically distributed $\forall \ j$. Since $\bbE\{\rho^2\}$ is finite, 
$P(A_{D_{mr}}(r)^c) \rightarrow 0$ as $r\rightarrow \infty$.

\section{Proof of Lemma \ref{lemma:main}}
\label{app:lemmain}
Let $K$ and $L$ be two finite subsets of $\bbR^2$, such that $K \subset \delta D_{10}, L\subset \delta D_{80}$, and $D_{10}\backslash D_{9} \subset K+D_1, D_{81}\backslash D_{80}\subset L+D_1$. For example, see Fig. \ref{fig:covering} where black dots represent the points of $K\subset \delta D_{10}$ covering  $D_{10}\backslash D_{9}$ using $D_1$.  Let $C_3$ be the product of the cardinality of $K$ and $L$. 
Assume that $B({\bf 0},10r)$ and $A_{D_{100r}}(r)$ occur. Thus there exists a node in $\zeta \in D_{10r}\backslash D_{9r}$ which is connected to a node in $D_{90r}\backslash D_{80r}$. By the definition of $K$, $\zeta \in rk+ D_r$ for some $k\in K$. See Fig. \ref{fig:proof} for a pictorial description. Moreover since 
$A_{D_{100r}}(r)$ also occurs, node $\zeta$ is connected to some node in $\zeta+D_{9r}\backslash D_{8r}$ with all nodes lying inside $D_{10r}+\zeta$, and length of each edge is less than $r$. Hence if $B({\bf 0},10r)$ and $A_{D_{100r}}(r)$ occur, then $\cup_{k\in K} B(rk,r)$ happens, where $P(B(rk,r)) = P(B({{\bf 0},r}))$ for any $k\in K$. Similarly looking at nodes around 
$D_{80r}$ and using the definition of $L$ we can show that if $B({\bf 0},10r)$ and $A_{D_{100r}}(r)$ occur then $\cup_{\ell \in L} B(r\ell,r)$ happens. Hence if both $B({\bf 0},10r)$ and $A_{D_{100r}}(r)$ occur simultaneously, then $\cup_{k\in K} B(rk,r) \cap \cup_{\ell \in L} B(r\ell,r)$ happens, where $P(B(r\ell,r)) = P(B({{\bf 0},r}))$ for any $\ell \in L$. 
From Lemma \ref{lemma:independence}, we know that  
the event $\cup_{k\in K} B(rk,r)$ depends only upon the nodes of $\Phi$ and $\Phi_E$ lying in $D_{20r}$, while the event $\cup_{\ell \in L} B(r\ell,r)$ depends only upon the nodes of $\Phi$ and $\Phi_E$ lying in $D_{69r}^c$. Since $D_{20r}$ and $D_{69r}^c$ are disjoint, and since $\Phi$ are $\Phi_E$ are independent PPPs,  the events  $\cup_{\ell \in L} B(r\ell,r)$ and $\cup_{k \in K} B(rk,r)$ are independent, and hence we get that
$P(B({\bf 0},10r)\cap A_{D_{100r}}(r)) \le C_3 P(B({{\bf 0},r}))^2$.

\section{Proof of Proposition \ref{prop:powercomp}}
\label{app:powercomp}
Let the signal power received from $x_1\in \Phi$ (located at origin) at the $n^{th}$ node of $\Phi$ be $I_n \bydef 
d_{1n}^{-\alpha}|h_{1n}|^2$, and $e^{th}$ eavesdropper of $\Phi_E$ be $I_e^{E} \bydef d_{1e}^{-\alpha}|h_{1e}|^2$. 
Note that since $h_{1n}$ and $h_{1e}$ are Rayleigh distributed, the channel power gains
$|h_{1n}|^2$ and $|h_{1e}|^2$ are exponential distributed. Let the PDF of $|h_{1n}|^2$ be  $\chi_n(x)$, and $|h_{1e}|^2$ be $\chi_e(x)$.
Let $g_1, g_2 >0$,
then define two Marked Point Processes
\[{\cal P}(g_1) = \left\{(x_n, I_n)\ | \ x_n \in \Phi \cap \bB({\bf 0},N_1)^c, 
\ I_n > g_1, \right\},\]
and
\[{\cal P}_E(g_2) = \left\{(e_m, I_m^E)\ | \ e_m \in \Phi_E,
\ I_m^E > g_2 \right\}.\] 
Let the mean number of nodes in the set ${\cal P}(g_1)$
be $\lambda_{g_1}$, and set ${\cal P}_E(g_2)$
be $\lambda^E_{g_2}$. Since $|h_{1n}|^2$ and $|h_{1e}|^2$ are independent $\forall \ n ,e$,  by Marking Theorem \cite{Stoyan1995}, 
both ${\cal P}(g_1)$ and ${\cal P}_E(g_2)$ are Poisson point processes, and
\begin{eqnarray*}
\lambda_{g_1} & = & \lambda \int_{N_1}^{\infty}
\int_{0}^{\left(\frac{x}{g_1}\right)^{\frac{1}{\alpha}}} 2\pi r \chi_n(x) dr dx, \\
&=& \frac{\lambda\pi}{g_1^{2/\alpha}} \int_{N_1}^{\infty} x^{2/\alpha}e^{-x}dx, \\
&\bydef& \lambda g_1^{-\delta}\nu_1,
\end{eqnarray*}
and 
\begin{eqnarray*}
\lambda^E_{g_2} & = & \lambda_E \int_{0}^{\infty}
\int_{0}^{\left(\frac{x}{g_2}\right)^{\frac{1}{\alpha}}} 2\pi r \chi_e(x) dr dx, \\
&=& \frac{\lambda_E\pi}{g_2^{2/\alpha}} \int_{0}^{\infty} x^{2/\alpha}e^{-x}dx, \\
&\bydef& \lambda_E g_2^{-\delta}\nu,
\end{eqnarray*}
where $\mu = \frac{2}{\alpha}$, $\nu_1 = \pi \int_{N_1}^{\infty} x^{2/\alpha}e^{-x}dx$, and $\nu = \pi \int_{0}^{\infty} x^{2/\alpha}e^{-x}dx$.

Then the cumulative density function (CDF) of $\Gamma = \max_{x_n \in \Phi \cap \bB({\bf 0},N_1)^c} I_{n}$ is equal to the
probability that there are no nodes of $\Phi$ in the set ${\cal P}(g_1)$ \cite{Haenggi2005}.
Thus,
\begin{eqnarray*}
P\left(\Gamma\le g\right) =  e^{-\lambda \nu_1 g_1^{-\delta}}.
\end{eqnarray*}
Similarly, the CDF of the largest received power at any eavesdropper  $\Delta = \max_{e \in \Phi_E} I_{e}^E$ is equal to the
probability that there are no nodes of $\Phi$ in the set ${\cal P}_E(g_2)$
\begin{eqnarray*}
P\left(\Delta \le g_2\right) =  e^{-\lambda_E \nu g_2^{-\delta}}.
\end{eqnarray*}
Thus, 
\begin{eqnarray*}
P(\Delta > \Gamma) &=&  \int_{0}^{\infty}e^{-\lambda \nu_1 g^{-\delta}} dP(\Delta \le g), \\
&=&  \int_{0}^{\infty}e^{-\lambda \nu_1 g^{-\delta}} 
\delta \lambda_E \nu
g^{-\delta  -1}
e^{-\lambda_E\nu g^{-\delta}}\ dg,\\
&=& \frac{\lambda_E\nu}{\lambda_E\nu + \lambda\nu_1}.
\end{eqnarray*}

\bibliographystyle{IEEEtran}
\bibliography{IEEEabrv,Research}

\begin{figure}
\centering
\includegraphics[width=4.5in]{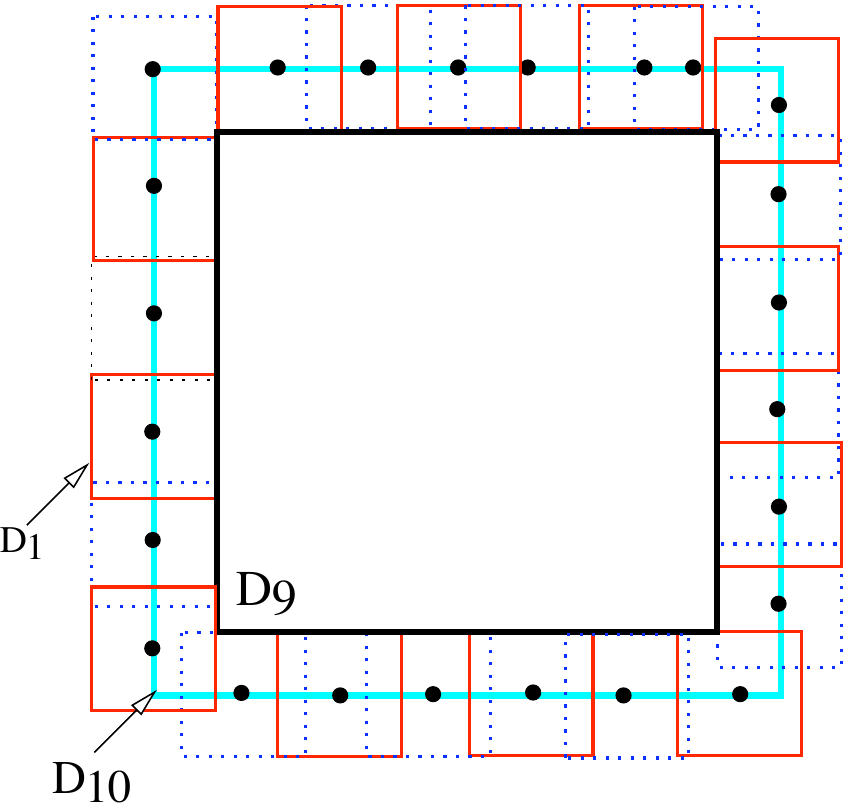}
\caption{Covering of $D_{10}\backslash D_{9}$ by discrete points lying on the boundary (black dots) of $D_{10}$ using $D_1$.}
\label{fig:covering}
\end{figure}

\begin{figure}
\centering
\includegraphics[width=4.5in]{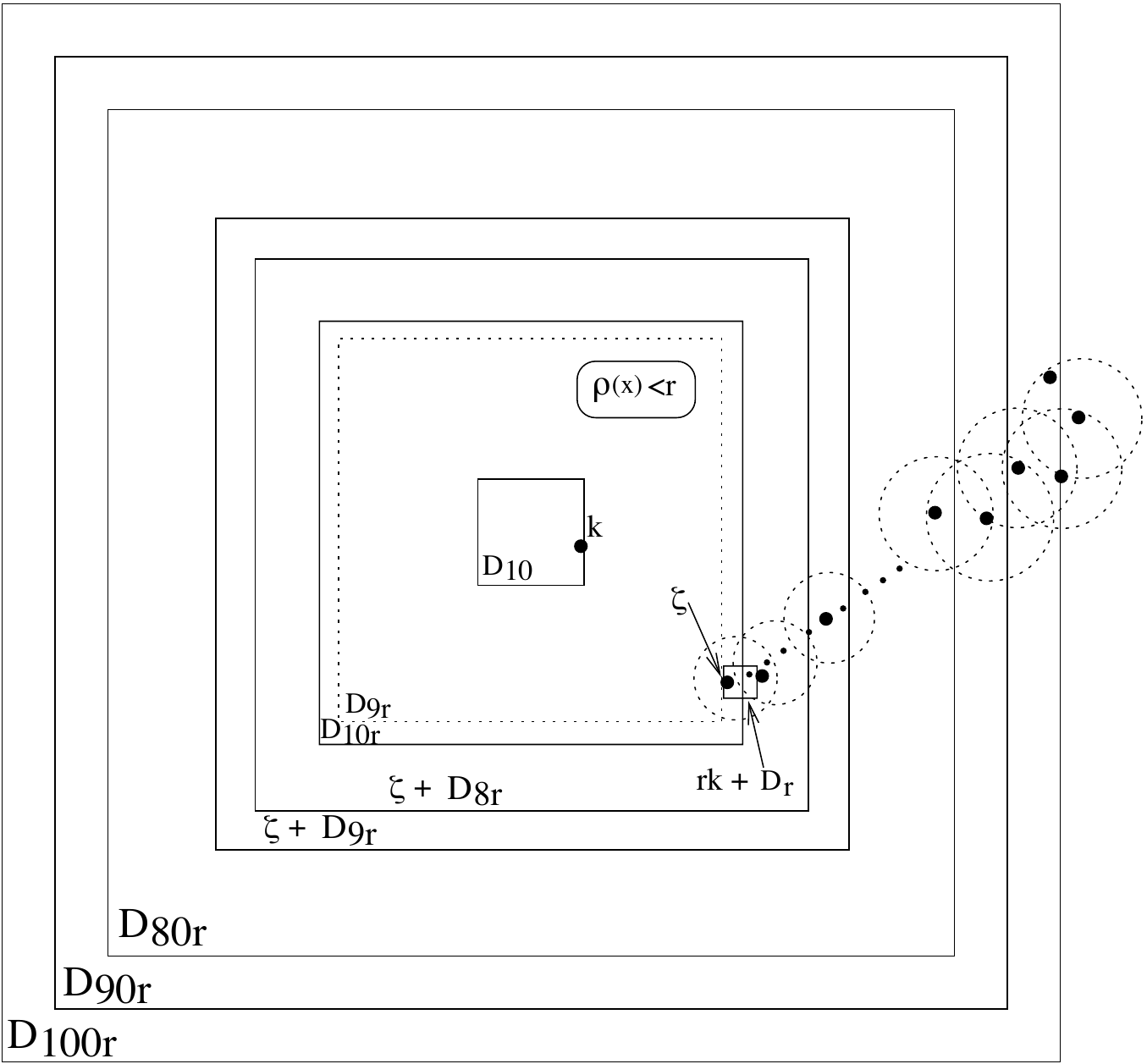}
\caption{Transmission capacity of the secondary network with multiple transmit and receive antennas.}
\label{fig:proof}
\end{figure}

\end{document}